\newtheorem{theorem}{Theorem}
\newtheorem{lemma}{Lemma}
\newtheorem{corollary}{Corollary}
\newtheorem{remark}{Remark}
\newtheorem{definition}{Definition}
\numberwithin{equation}{section}
\newenvironment{proof}[1][Proof]{\noindent\textbf{#1.} }{\ \rule{0.5em}{0.5em}}
\renewcommand{\epsilon}{\varepsilon}
\newcommand{\ket}[1]{\mathop{\left|#1\right>}\nolimits}
\newcommand{\bra}[1]{\mathop{\left<#1\,\right|}\nolimits}
\newcommand{\kb}[2]{| #1\rangle\!\langle #2 |}
\newcommand{\Tra}[1]{\mathop{{\mathrm{Tr}}_{#1}}}
\newcommand{\Tr}[2]{\mathop{{\mathrm{Tr}}_{#1}} (#2) }
\def\N{\mathcal{N}}
\def\M{\mathcal{M}}
\def\T{\mathcal{T}}
\def\C{\mathcal{C}}
\def\D{\mathcal{D}}
\def\U{\mathcal{U}}
\let\@copyrightspace\relax
\begin{document}

\sloppy

\centerline{\Large \bf Quantum Information Complexity and}
\centerline{\Large \bf Amortized Communication}

\centerline{Dave Touchette \,\footnote{touchette.dave@gmail.com, Laboratoire d'informatique th\'eorique et quantique, D\'epartement d'informatique et de recherche op\'erationnelle, Universit\'e de Montr\'eal. }}

\thispagestyle{empty}

\begin{abstract}

We define a new notion of information cost for quantum protocols, and a 
corresponding notion of quantum information complexity for bipartite 
quantum channels, and then investigate the properties of such quantities.
These are the fully quantum generalizations of the analogous quantities for bipartite
classical functions that have found many applications recently, in particular for proving communication complexity lower bounds.
Our definition is 
strongly tied to the quantum state redistribution task.

Previous attempts have been made to define such a
quantity for quantum protocols, with particular applications in mind; our notion
differs from 
these in many respects.
First, it directly provides a lower bound on the quantum 
communication cost, independent of the number of rounds of the 
underlying protocol. Secondly,
we provide an operational interpretation for
quantum information complexity: we show that it is exactly equal
to the amortized quantum communication complexity of a bipartite 
channel on a given state. 
This generalizes a result of Braverman and Rao to quantum protocols,
and even strengthens the classical result in a bounded round scenario.
Also, this provides an analogue of the Schumacher 
source compression theorem for interactive quantum protocols, and
answers a question raised by Braverman.

We also discuss 
some potential applications to quantum communication complexity 
lower bounds by specializing our definition for classical functions and inputs.
Building on work of Jain, Radhakrishnan and Sen, we provide new evidence suggesting that the
bounded round quantum communication complexity of the disjointness function is $\Omega ( \frac{n}{M} + M)$, for $M$-message protocols.
This would match the best known upper bound.
\end{abstract}




\newpage

\setcounter{page}{1}

\section{Introduction}

The paradigm of information complexity has been quite successful recently
 in classical communication complexity. What started out as a 
useful tool for proving communication complexity lower bounds has 
recently developed into an important subfield of its own.
The definition of information complexity, the sum 
of the mutual information between the protocol transcript and 
each player's input conditional on the other player's input,
makes it possible to bring powerful tools from information theory to study interactive 
communication. Many recent results show 
that this paradigm has enabled researchers to tackle questions that seemed out of reach not so long ago,
like an exact characterization by Braverman, Garg, Pankratov, and Weinstein \cite{BGPW13} of the communication complexity of the disjointness function.

The first results on information complexity were those of Chakrabarti, Shi, Wirth and Yao \cite{CSWY01}, 
who gave lower bounds for communication complexity in the simultaneous
 message passing model by defining what would now be called external information
 cost. A subsequent paper by Bar-Yossef, Jayram, Kumar and Sivakumar \cite{BYJKS02} implicitly defined what is now 
called the internal information cost, or simply information cost. They
used it to prove communication complexity lower bounds for 
functions that can be built from simpler component functions, 
like the set disjointness function with respect to the AND function. The recent wave of results started 
 with a compression result of Barak, Braverman, Chen and Rao \cite{BBCR10} that had application in particular
to proving a direct sum theorem for randomized communication complexity. 
In \cite{BR11}, Braverman and Rao give an exact operational interpretation of the 
information complexity as the  amortized distributional communication 
complexity, which can be viewed as an analogue of the 
Shannon source compression theorem for interactive protocols. 
In \cite{Bra12a}, Braverman provides a prior-free definition for information complexity of 
classical functions, and presents a similar operational interpretation 
for worst case amortized communication complexity. 
Since then, many results have been derived  by using the  information complexity 
paradigm~\cite{BGPW13, BM13, BEOPV13}, witnessing its power.
Braverman provides a nice overview of results circa 2012 \cite{Bra12b}.

It might appear surprising that the definition of this simple 
looking quantity, the information cost of a protocol, has such far 
reaching consequences. It is tempting to try to bring such 
a paradigm to the quantum setting and hopefully get comparable results. 
However, there are many difficulties in trying to get 
such a quantum generalization of information cost.
Firstly, due to monogamy of entanglement, 
there is no good notion of an overall transcript and its corresponding correlation with the
localized inputs at the beginning of the protocol. Indeed, it is only possible to evaluate 
quantum information quantities for quantum registers that 
are defined at the same moment in time. But then, the no-cloning 
theorem \cite{Dieks82, WZ82} forbids copying of previous messages to generate a final
 transcript accounting for all communication in the quantum protocol, 
on which we could evaluate a quantum information cost. 
Even for protocols with pre-shared entanglement and classical communication, 
the final classical transcript could be completely uncorrelated to the inputs. This can be seen for example in 
a protocol that teleports at each time step, which would
generate a transcript that is uniformly distributed.
However, 
notwithstanding these difficulties, Jain, Radhakrishnan and Sen \cite{JRS03}, as well as Jain and Nayak \cite{JN13}, gave
different definitions for such a quantity with specific applications in mind. 
In particular, the definition in \cite{JRS03} leads to a beautiful proof 
of a lower bound of $\Omega (\frac{n}{ M^2} + M)$ on bounded round quantum communication 
complexity for (size $n$) set disjointness computed by 
$M$-message protocols. Note that the remark is made in \cite{JRS03} that the optimal $\Theta ( \sqrt{n}) $ protocol of Aaronson and Ambainis \cite{AA03} can be adapted to yield a bounded round protocol achieving communication of $O(\frac{n}{M} + M)$. Hence, this comes close to match the best known 
upper bound.
However, even if these definitions can be successful for obtaining 
interesting results in a bounded round scenario, it is quite plausible 
that they are also limited to such applications. Indeed,
given these previous definitions of quantum information cost, it is quite easy
 to find particular inputs and protocols with $M$ messages and 
quantum communication cost $C$ such that the quantum 
information cost is $\Omega (M \cdot C)$. We believe that a more natural
definition of quantum information cost would not have such a dependence 
on the round complexity of protocols. 
In contrast, the classical information cost directly provides a lower bound on the communication cost,
and this property is crucial in many of its recent applications.
Hopefully, it would also lead to an operational
 interpretation analogous to the classical one for amortized quantum 
communication complexity. A related question 
was asked by Braverman \cite{Bra12a}, on how to define the correct quantum analogue 
of information complexity.

\section{Overview of Results}

We show that despite the previously mentioned setbacks, it is possible 
to define a quantum information cost quantity for protocols 
and a corresponding quantum information complexity quantity 
for channels that satisfies the two important properties asked for above.
First, they are lower bounds on the quantum communication cost and 
quantum communication complexity of the corresponding protocols 
and channels, respectively. Secondly, we can compress asymptotically 
any protocol to its quantum information cost, and so we can give 
the operational interpretation for quantum information complexity 
as the amortized quantum communication complexity. That is, the 
quantum communication complexity per copy for implementing $n$ 
copies of the channel, in the asymptotic limit of large $n$. 
The quantity that we consider would be the fully quantum analogue of the 
distributional information complexity of functions: we study the quantum 
information complexity of a channel on some arbitrary quantum input.
In the case of a so-called classical input, this corresponds to a probability 
distribution. When implementing a bipartite channel with a protocol 
on a general input, we allow for some small error $\epsilon$ in the 
trace distance with respect to a reference system purifying the input.
The introduction of the reference system is to ensure 
that our implementation maintains correlations with the outside 
world as well as the actual channel. For classical inputs and 
functions, we show that this implies a bound on the probability of failure, in the sense
of a classical average on the input distribution, of 
the quantum protocol for computing the function. Hence, this links our quantity to the 
distributional quantum communication complexity of 
classical functions. When implementing $n$ copies of a
 general channel, the success criterion is that for each 
instance of the channel, the error is bounded by $\epsilon$.

To circumvent the difficulties we mentionned in defining a quantum
 generalization of information cost and complexity, we take 
a different perspective on the classical definitions. Indeed, 
our main insight in defining such quantities comes by using
 a rewriting of the information cost of a protocol that was 
already implicit in previous works on information complexity. 
We reinterpret this quantity by viewing each message generation 
in the protocol as a noisy channel whose output is to be sent over a
noiseless channel. Then, message transmission
 is viewed as the simulation, over a noiseless channel, of a noisy channel with feedback to the
 sender and side information at the receiver. This is a variant of
 what is known as a (tensor power input) classical reverse Shannon theorem in the information 
theory literature. Using previously known results 
\cite{BSST02, LD09}, we obtain an alternate, simple proof 
of the operational interpretation of classical information complexity. 
An additional feature of this proof in contrast to the one of Braverman and Rao \cite{BR11}
is that it maintains the round complexity of the original protocol, since
these results about noisy channel simulation use only unidirectional transmission.
To the best of our knowledge, this is the first proof to establish that the $M$-message information
complexity is exactly equal to the $M$-message amortized communication complexity.
Note however that this proof only works in the asymptotic limit, so
we do not get as a bonus of the techniques used for our coding theorem a compression theorem for
a single copy of the protocol. It is reasonable to hope that 
eventual results on so-called one-shot unidirectional coding theorems might lead to interesting results about
bounded round, single copy protocol compression.

 Using this perspective on classical information cost,
we can more easily define a quantum generalization.
The right quantum analogue of the reverse Shannon task with
 feedback and side information is quantum state redistribution, 
for which Devetak and Yard give an optimal protocol \cite{DY08, YD09}. 
With these tools in hands, we then
define the quantum analogues of information cost and complexity,
 and prove their operational interpretations. In addition, we also prove some
 interesting properties of these quantities.
In particular, it is an almost immediate consequence of the definition
 that the quantum information cost of a protocol lower bounds its
 quantum communication cost, and then a similar result holds for
quantum communication complexity relative to channels.
We also prove that with our definition, the quantum information 
complexity is 
an additive quantity, and that it is convex and continuous in the channel and error parameter. Finally, relative to input states, we prove
a concavity result on quantum information cost.
At this point, given this operational interpretation and the 
fact that information is a lower bound on communication, 
we might argue that this corresponds to the right quantum analogue 
of the information complexity. 

To demonstrate the potential of the quantum information
 complexity paradigm, we reduce the quantum information complexity of the disjointness
 function to that of the AND function, 
a reduction along similar lines as that of \cite{BYJKS02, JRS03}. In contrast to the quantum reduction of \cite{JRS03},
 we are able to get rid of a factor $\frac{1}{M}$ for quantum protocols with $M$ messages.
Given that their lower bound for 
bounded round complexity for $n$-bit disjointness is $\Omega (\max (\frac{n}{M^2},M))$, 
we are thus optimistic that this reduction by a factor of $\frac{1}{M}$ 
using our approach could  lead to a lower bound 
of $\Omega (\max (\frac{n}{M}, M))$, thus matching the best currently
 known upper bound \cite{AA03, JRS03}. 
 Like classical information
 cost, our quantity is defined in terms of conditional mutual
 information, a quantity that has been notoriously hard to lower
 bound in the quantum setting \cite{LR73, BCY11}. 
We leave as an interesting open problem to develop tools for
lower bounding the conditional mutual information in our setting, and to see if we can obtain such
a result for the disjointness function. We explore
 other potential applications in the conclusion.

\paragraph{Organization.}
The structure of the paper is the following. In the next section, 
we fix the notation that we use for quantum mechanics, present the 
necessary quantum information theory background, define 
formally the quantum communication model that we use and also define 
quantum communication complexity in this model. We then present a 
perspective on classical information complexity that leads us to a 
quantum generalization, then give such a definition of quantum 
information complexity and finally prove its operational interpretation in 
the following section. We also explore some of the properties of our 
definition, and then go on to discuss some potential applications. 
We conclude with a discussion of our results, additional potential applications, and further research directions.

\section{Preliminaries}
	\label{sec:prel}

\subsection{Quantum Information Theory}

We use the following notation for quantum theory; see \cite{Wat13, Wilde11} for more details. 
We associate a quantum register $A$ with
a corresponding vector space, also denoted by $A$. We only consider 
finite-dimensional vector spaces. A state of quantum register $A$ is 
represented by a density operator $\rho \in \D (A)$, with $\D (A)$ the set 
of all unit trace, positive semi-definite linear operators mapping $A$ into itself. 
We say that a state $\rho$ is pure if
it is a projection operator, i.e.~$(\rho^{AR})^2 = \rho^{AR} $. 
For a pure state $\rho$, we often use the pure state formalism, and represent $\rho$ by the
vector $\ket{\rho}$ it projects upon, i.e.~$\rho = \kb{\rho}{\rho}$.
A quantum channel from quantum register $A$ into quantum register $B$ is represented
by a super-operator $\N^{A \rightarrow B} \in \C (A, B)$, with $\C (A, B ) $ the set of
 all completely positive, trace-preserving linear operators from $\D( A )$ 
into $D(B)$. If $A=B$, we might simply write $\N^A$, and when systems are clear from context, we might drop the superscripts.
For channels $\N_1 \in \C (A, B), \N_2 \in \C (B, C)$ and state $\rho \in \D (A)$, we denote their composition as $\N_2 \circ \N_1 \in \C (A, C)$, with action $\N_2 \circ \N_1 (\rho) = \N_2 (\N_1 (\rho))$. We might drop the $\circ$ if the composition is clear from context.
For $A$ and $B$ isomorphic, we denote the identity mapping 
as $I^{A \rightarrow B}$, with some implicit choice for the change of basis.
 For $\N^{A_1 \rightarrow B_1} \otimes I^{A_2 \rightarrow B_2} \in C(A_1 \otimes A_2, B_1 \otimes B_2)$, 
we might abbreviate this as $\N$ and leave the identity channel implicit 
when the meaning is clear from context. An important subset
 of $\C (A, B)$ when $A$ and $B$ are isomorphic spaces is the set
 of unitary channels $\U (A, B)$, the set of all maps $U \in\C (A, B)$ with an 
adjoint map $U^\dagger \in \C (B, A)$ such that
 $U^\dagger \circ U = I^{A}$. 
Another important example of channel that we use is the
 partial trace $\Tr{B}{\cdot} \in C(A \otimes B, A)$ which effectively 
gets rid of the $B$ subsystem. 
Fixing a basis $\{ \ket{b} \}$ for $B$, the action of $\Tra{B}$ on 
any $\rho^{AB} \in D(A \otimes B)$ is $\Tr{B}{\rho^{AB}} = 
\sum_b \bra{b} \rho^{AB} \ket{b}$, and we write $\rho^A = \Tr{B}{\rho^{AB}}$.
We also denote $\Tra{\neg A} = \Tra{B}$ to express that we want to keep only the $A$ register.
Fixing a basis also allows us to talk about classical states and 
joint states: $\rho \in D(B)$ is classical (with respect to this basis) if it is diagonal in basis
 $\{ \ket{b} \}$, i.e.~$\rho = \sum_b p_B (b) \kb{b}{b}$ for some probability
 distribution $p_B$. More generally, subsystem $B$ of $\rho^{AB}$ is 
said to be classical if we can write $\rho^{AB} = \sum _b p_B (b) 
\kb{b}{b}^B \otimes \rho_b^A$ for some $\rho_b^A \in D(A)$.
An important example of a channel mapping a quantum system 
to a classical one is the measurement channel $\Delta_B$, defined 
as $\Delta_B ( \rho) = \sum_b \bra{b} \rho \ket{b} \cdot \kb{b}{b}^B$ for any $\rho \in D(B)$.
Often, $A, B, C, \cdots$ will be used to discuss general systems,
 while $X, Y, Z, \cdots$ will be reserved for classical systems.
For a state $\rho^A \in D (A)$, a purification is a pure state
 $\rho^{AR} \in \D (A \otimes R)$ satisfying $\Tr{R}{\rho^{AR}} = \rho^A$. 
If $R$ has dimension at least that of $A$, then such a purification always exists.
For a given $R$, all purifications are equivalent up to unitaries.
For a channel $\N \in C(A, B)$, a unitary extension is a
 unitary $U_\N \in U(A \otimes B^\prime, A^\prime \otimes B)$ with
 $\Tr{A^\prime}{U_\N (\rho^A \otimes \sigma^{B^\prime})} 
= \N (\rho^A)$ for some fixed $\sigma \in \D (B^\prime)$.
It is sufficient to consider any fixed pure state $\sigma$.
Such an extension always
 exists provided $A^\prime$ is of dimension at least $\dim (A)^2$ (note that we also must have 
$\dim (A) \cdot \dim (B^\prime) = \dim(A^\prime) \cdot \dim(B)$).

The notion of distance we use is the trace distance, defined for 
two states $\rho_1, \rho_2 \in D (A)$ as the sum of the absolute values of the eigenvalues of their difference:
\begin{align*}
 \| \rho_1 - \rho_2 \|_A = \Tr{}{| \rho_1 - \rho_2 |}.
\end{align*}
It has an operational interpretation as four times the best bias possible in a 
state discrimination test between $\rho_1$ and $\rho_2$.
The subscript tells on which subsystems the trace distance is evaluated, and remaining subsystems
might need to be traced out.
We use the following results about trace distance. For proofs of 
these and other standard results in quantum information theory that 
we use, see \cite{Wilde11}. The trace distance is monotone under
 noisy channels: for any $\rho_1, \rho_2 \in \D (A)$ and $\N \in C(A, B)$,
\begin{align}
	\| \N (\rho_1) - \N (\rho_2) \|_B \leq \| \rho_1 - \rho_2 \|_A.
\end{align}
For unitaries, the equality becomes an identity, a property called unitary
 invariance of the trace distance. Hence, for any $\rho_1, \rho_2 \in D(A)$ and any $U \in \U (A, B)$, we have
\begin{align}
	\| U (\rho_1) - U (\rho_2) \|_B = \| \rho_1 - \rho_2 \|_A.
\end{align}
Also, the trace distance cannot be increased by adjoining an uncorrelated system:
for any $\rho_1, \rho_2 \in D(A), \sigma \in \D (B)$
\begin{align}
	\| \rho_1 \otimes \sigma - \rho_2 \otimes \sigma \|_{AB} =  \| \rho_{1} - \rho_{2}  \|_{A}.
\end{align}
It follows that the trace distance obeys a property that we call joint linearity: 
for a classical system $X$ and two states $\rho_1^{XA} = 
p_X (x) \kb{x}{x}^X \otimes \rho_{1, x}^A, \rho_2^{XA} = p_X (x) \kb{x}{x}^X \otimes \rho_{2, x}^A$,
 \begin{align}
	\| \rho_1 - \rho_2 \|_{XA} =  \sum_x p_X (x) \| \rho_{1,x} - \rho_{2, x}  \|_{A}.
\end{align}

The measure of information that we use is the von Neumann entropy, defined for any state $\rho \in D(A)$ as
\begin{align*}
	H(A)_\rho = \Tr{}{\rho \log \rho},
\end{align*}
in which we take the convention that $0 \log 0 = 0$, justified by a continuity argument. All logarithms are taken base $2$. 
Note that $H$ is invariant under unitaries applied on $\rho$.
If the 
state to be evaluated is clear from context, we might drop the subscript. 
Conditional entropy for a state $\rho^{ABC} \in D(A \otimes B \otimes C)$ is then defined as
\begin{align*}
	H(A | B)_{\rho^{AB}} = H(AB)_{\rho^{AB}} - H(B)_{\rho^B},
\end{align*}
mutual information as
\begin{align*}
	I(A; B )_{\rho^{AB}} = H (A)_{\rho^A} - H(A | B)_{\rho^{AB}},
\end{align*}
and conditional mutual information as
\begin{align*}
	I(A; B | C )_{\rho^{ABC}} = H (A | C)_{\rho^{AC}} - H(A | B C)_{\rho^{ABC}}.
\end{align*}
Note that mutual information and conditional mutual information are symmetric in interchange of $A, B$.
For any pure bipartite state $\rho^{AB} \in D(A \otimes B)$, the entropy on each subsystem is the same:
\begin{align}
	H(A) = H(B).
\end{align}
For isomorphic $A, A^\prime$, a maximally entangled state $\psi \in \D (A \otimes A^\prime)$ is
a pure state satisfying $H(A) = \log \dim (A)$.
For a system $A$ of dimension $\dim (A)$ and any $\rho \in D(A \otimes B \otimes C )$, we have the bounds
\begin{align}
	0 \leq H(A) \leq \log \dim (A), \\
	- H (A) \leq H(A | B) \leq H (A), \\
	0 \leq I(A; B) \leq 2 H (A), \\
	0 \leq I(A; B | C) \leq 2 H (A).
\end{align}
The conditional mutual information satisfy a chain rule: for any $\rho \in D(A \otimes B \otimes C \otimes D)$,
\begin{align}
	I (AB ; C | D) = I (A; C | D) + I(B; C | A D).
\end{align}
For product states $\rho^{A_1 B_1 C_1 A_2 B_2 C_2} = \rho_1^{A_1 B_1 C_1} \otimes \rho_2^{A_2 B_2 C_2}$, entropy is additive,
\begin{align}
	H(A_1 A_2) = H(A_1) + H(A_2),
\end{align}
and so there is no conditional mutual information between product system,
\begin{align}
	I(A_1 ; A_2 | B_1 B_2 ) = 0,
\end{align}
and conditioning on a product system is useless,
\begin{align}
	I(A_1 ; B_1 | C_1 A_2 ) = I(A_1 ; B_1 | C_1 ).
\end{align}
More generally,
\begin{align}
	I(A_1 A_2 ; B_1 B_2 | C_1 C_2 ) = I(A_1 ; B_1 | C_1 ) + I(A_2; B_2 | C_2).
\end{align}
Two important properties of the conditional mutual information are strong subadditivity and the data processing inequality:
we consider an equivalent rewriting of strong subadditivity, which states that conditional mutual information is non-negative.
For any $\rho \in \D (A \otimes B \otimes C)$ and $\N \in \C (B, B^\prime)$, with $\sigma = \N (\rho)$,
\begin{align}
	I (A; B | C)_\rho & \geq 0, \\
	I (A; B | C)_\rho & \geq I (A; B^\prime | C)_\sigma.
\end{align}
For classical systems, conditioning is equivalent to taking an average:
for any $\rho^{ABCX} = \sum_x p_X(x) \kb{x}{x}^X \otimes \rho_x^{ABC}$, for a classical system $X$ and
some appropriate $\rho_x \in \D (A \otimes B \otimes C)$,
\begin{align}
	H (A | B X )_\rho & = \sum_x p_X (x) H (A | B)_{\rho_x},\\
	I (A; B | C X )_\rho & = \sum_x p_X (x) I(A; B | C)_{\rho_x}.
\end{align}

\subsection{Quantum Communication Model}
\label{sec:qucomm}

We want to study in full generality the quantum communication complexity 
of bipartite quantum channels on particular input states. This is the 
generalization of distributional communication complexity of classical functions to
the fully quantum setting, 
and contains as a special case the distributional 
quantum communication complexity of classical functions. The model for 
communication complexity that we consider is the following. For a given 
bipartite channel $\N \in \C(A_{in} \otimes B_{in}, A_{out} \otimes B_{out})$ 
and input state $\rho \in \D (A_{in} \otimes B_{in})$, Alice and Bob are 
given input registers $A_{in}, B_{in}$ at the outset of the protocol, respectively, 
and they output registers $A_{out}, B_{out}$ at the end of the protocol, 
respectively, which should be in state $N(\rho)$. We generally allow for some small
error $\epsilon$ in the output, which will be formalized below. 
In the usual communication complexity setting, the input is be a classical state $\rho = \sum_{x, y} p_{XY} (x, y) \kb{x}{x}^{A_{in}} \otimes \kb{y}{y}^{B_{in}}$, the channel $\N$ implements a classical functions $\N (\kb{x}{x} \otimes \kb{y}{y}) = \kb{f_A (x, y)}{f_A (x, y)}^{A_{out}} \otimes \kb{f_B (x, y)}{f_B (x, y)}^{B_{out}}$, and the error parameter is related to the probability of failure $\sum_{x, y} p_{XY} (x, y) [\Pi (x, y) \not= (f_A(x, y), f_B (x, y)) ] \leq \frac{\epsilon}{2}$, as proved in section \ref{sec:cltd}.

A protocol $\Pi$ for implementing $\N$ on input $\rho^{A_{in} B_{in}}$ is 
defined by a sequence of unitaries $U_1, \cdots, U_{M + 1}$ along with a 
pure state $\psi \in \D (T_A \otimes T_B)$ shared between Alice and Bob, 
for arbitrary finite dimensional registers $T_A, T_B$. For appropriate finite 
dimensional memory registers $A_1, A_3, \cdots A_{M - 1}, A^\prime$ held by Alice, $B_2, B_4, \cdots B_{M - 2}, B^\prime$ 
held by Bob, and communication registers $C_1, C_2, C_3, \cdots C_M$ 
exchanged by Alice and Bob, we have (see Figure \ref{fig:prot}) 
$U_1 \in \U(A_{in} \otimes T_A, A_1 \otimes C_1), 
U_2 \in \U(B_{in} \otimes T_B \otimes C_1, B_2 \otimes C_2), 
U_3 \in \U(A_1 \otimes C_2, A_3 \otimes C_3), 
U_4 \in \U(B_2 \otimes C_3, B_4 \otimes C_4), \cdots , 
U_{M} \in \U(B_{M - 2} \otimes C_{M - 1}, B_{out} \otimes B^\prime \otimes C_{M}), 
U_{M  + 1} \in \U(A_{M - 1} \otimes C_M, A_{out} \otimes A^\prime)$.
We slightly abuse notation and also write $\Pi$ to denote the channel implemented by the protocol, i.e.
\begin{align}
\Pi (\rho) =
\Tr{A^\prime B^\prime }{U_{M  + 1} U_M \cdots U_2 U_1 (\rho \otimes \psi)}.
\end{align}
Then we say that a protocol $\Pi$ for implementing channel $\N$ 
on input $\rho^{A_{in} B_{in}}$, with purification $\rho^{A_{in} B_{in} R}$ 
for a reference system $R$, 
has error $\epsilon \in [0, 2]$ if 
\begin{align}
|| \Pi (\rho) - \N (\rho) ||_{A_{out} B_{out} R} \leq \epsilon.
\end{align}
We denote the set of all such protocol as $ \T (\N, \rho, \epsilon)$.
If we want to restrict this set to bounded round protocols with  $M$ messages, we write $\T^M (\N, \rho, \epsilon)$.
Note that for simplicity, we only define protocols with an even number of messages; our results also hold without this restriction, though in the
special case of one round protocols, we would rather consider bipartite channels with a single output to ensure that the quantum communication complexity is well-defined.
The introduction of the reference system $R$ is essential to ensure that the protocol preserves any correlation
the input state might have with the outside world as well as the channel it is supposed to implement.
As said before, for classical functions on classical input distributions, we prove a lemma in section \ref{sec:cltd} that relates this
to the probability of failure of the protocol on such a distribution.

      \begin{figure}

         \centering
      			  \includegraphics[width=12cm]{./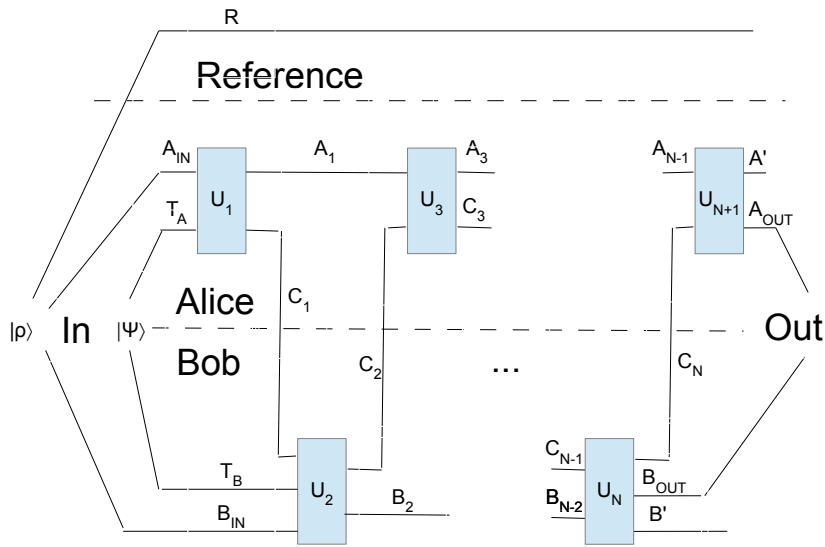}
	\caption{A standard protocol in our quantum communication model}
	\label{fig:prot}
	\end{figure}

Note that in the standard context of quantum communication complexity, 
our model would be akin to the model introduced by Cleve and Buhrman \cite{CB97}, with pre
-shared entanglement (though the fact that we use quantum 
communication here instead of classical communication as in 
the original model could lead to an improvement up to a factor 
of two of the communication complexity, due to superdense coding \cite{BW92}, but no more, due to 
the teleportation protocol \cite{BBCJPW93}), rather than to the model introduced by Yao \cite{Yao93}, in which 
parties locally initialize their registers. This is the natural analogue 
of the framework for classical information complexity in which parties 
are allowed shared randomness for free, and this seems to be
 necessary to obtain the operational interpretation of information 
complexity, classical and quantum, as the amortized communication
 complexity. Known proofs of the additivity property rely heavily 
on the availability of shared resources to perform some kind of simulation.
This is also true of many other interesting properties of information complexity.
Even though additivity and other results might not hold in the Yao model, studying information complexity
in this model might still make sense: we would first have to restrict ourselves to protocols in which the pre-shared state
$\psi$ is a pure product state. However, the definition of quantum information complexity would need to be somewhat modified,
since the definition we give in section \ref{sec:qic} allows for entanglement distribution at no cost, which is consistent with our Cleve-Buhrman like model of communication.

As was said before, our framework is the quantum generalization 
of the one for distributional information complexity, and so let us formally 
define the different quantities that we work with.

\begin{definition}
For a protocol $\Pi$ as defined above, we define the \emph{quantum communication cost} of $\Pi$ as
\begin{align*}
	QCC (\Pi) = \sum_{i} \log \dim (C_i).
\end{align*}
\end{definition}

Note that we do not require that $\dim (C_i) = 2^k$ for some $k \in \mathbb{N}$, as is usually done.
This will not affect our definition on information cost and complexity, nor on amortized communication complexity, but
might affect the single-copy quantum communication complexity by at most a factor of two.
The corresponding notion of quantum communication complexity of a channel is:

\begin{definition}
For a bipartite channel $\N \in \C (A_{in} \otimes B_{in}, A_{out} 
\otimes B_{out})$, an input state $\rho \in D(A_{in} \otimes B_{in})$ and an
error parameter $\epsilon \in [0, 2]$, we define the $\epsilon$-error \emph{quantum communication complexity} of $\N$ 
on input $\rho$ as
\begin{align*}
	QCC (\N, \rho, \epsilon) = \min_{\Pi \in \T (\N, \rho, \epsilon)} QCC (\Pi).
\end{align*}
\end{definition}

Note that this quantity is discontinuous in its parameters. Also note that no good bound is known on the size of the
entangled state that might be required to achieve this minimum. See \cite{MV14} for
a recent discussion on related issues in a different setting.
We make the following trivial remark that quantum communication complexity is decreasing in the error parameter,
that it vanishes for $\epsilon = 2$, that it is bounded by $\log \dim (A_{in}) +\log \dim ( A_{out} )$, and that it also 
vanishes for any pure state $\rho$.
At $\epsilon = 2$, it is because the trace distance is saturated at $2$ and so we can consider
a protocol that outputs anything without communication, while for pure 
states it is because there is no correlation with the outside world, so we can 
consider a protocol that is given, as entanglement for the protocol, the output of the channel acting on the pure state, and outputs it without communication.

\begin{remark} 
For any $\N, \rho, 0 \leq \epsilon_1 \leq \epsilon_2 \leq 2$, the following holds:
\begin{align*}
QCC(\N, \rho, \epsilon_2) & \leq QCC(\N, \rho, \epsilon_1), \\
QCC (\N, \rho, 0) & \leq \log \dim (A_{in}) + \log \dim (A_{out}), \\
QCC(\N, \rho, 2) & = 0.
\end{align*}
Also, for any $\N, \epsilon \in [0,2]$, the following holds for any pure state $\rho$ :
\begin{align*}
QCC(\N, \rho, \epsilon) = 0.
\end{align*}
\end{remark}

We have the following definition for bounded round quantum communication complexity, and
similar remarks hold.

\begin{definition}
For a bipartite channel $\N \in \C (A_{in} \otimes B_{in}, A_{out} 
\otimes B_{out})$, an input state $\rho \in D(A_{in} \otimes B_{in})$,an
error parameter $\epsilon \in [0, 2]$ and a bound $M \in \mathbb{N}$ on the number of messages, 
we define the $M$-message, $\epsilon$-error \emph{quantum communication complexity} of $\N$ 
on input $\rho$ as
\begin{align*}
	QCC^M (\N, \rho, \epsilon) = \min_{\Pi \in \T^M (\N, \rho, \epsilon)} QCC (\Pi).
\end{align*}
\end{definition}

We are also interested in the amortized quantum communication 
complexity of channels. A protocol $\Pi_n$ is said to compute the 
$n$-fold product channel $\N^{\otimes n}$ on input $(\rho)^{\otimes n}$ with error 
$\epsilon$ if for all $i \in [n]$, 
\begin{align}
\| \Tra{\neg (A_{in}^i B_{in}^i R^i)}
\circ \Pi_n (\rho^{\otimes n}) - 
\N (\rho) \|_{A_{out}^i B_{out}^i R^i} \leq \epsilon.
\end{align} 
We have implicitly used the fact that it is possible to find a purification of $\rho^{\otimes n}$
with a decomposition of the purifying register $R = R_1 \otimes \cdots \otimes R_n$, and with the $i$-th copy of $\rho$
purified by the reference subregister $R_i$.
This error criterion corresponds to the one achieved when sequentially simulating $n$ times channel $\N$ on input
 $\rho$, each time with error $\epsilon$, and is weaker 
than demanding to simulate it $n$ times with overall error $\epsilon$.
 The reason for this is that asking for overall error $\epsilon$ 
could be a much harder task. Indeed, consider a purified 
input state that is $\epsilon$ away in trace distance to a state 
which is product with respect to the $A_{in} B_{in} - R$ bipartite cut. 
Then, since the trace distance is monotone under noisy channels,
 the parties can simulate the channel at zero communication cost
 and achieve error $\epsilon$ by taking,
as part of the entanglement of their protocol,
 the $A_{out} B_{out}$ registers
 of the channel acting on that product state. 
Thus the quantum information complexity is also zero. We can then
 also achieve the task of amortized quantum communication 
complexity with $\epsilon$ error in each input at zero communication. 
However, using the operational interpretation of the trace distance as the best bias
in a distinguishability experiment, the amortized quantum communication task in which we ask 
for overall error $\epsilon$ cannot be achieved at zero communication
 cost, since having access to many instances of the output state allows
 for better distinguishability whenever starting with distinguishability greater 
than zero between the actual input and the product state \cite{BCGST02}. Hence, if we want to obtain the intended 
operational interpretation, we have to settle for such a success 
parameter. We denote $\T_n (\N^{\otimes n}, \rho^{\otimes n}, \epsilon)$ 
the set of all protocols achieving the above goal of having $\epsilon$ error 
in each output, and can define the n-fold quantum communication complexity accordingly.

\begin{definition}
For a bipartite channel $\N \in \C (A_{in} \otimes B_{in}, A_{out} 
\otimes B_{out})$, an input state $\rho \in D(A_{in} \otimes B_{in})$ and an
error parameter $\epsilon \in [0, 2]$, we define the $\epsilon$-error,
\emph{$n$-fold quantum communication complexity} of $\N$ on input $\rho$ as
\begin{align*}
	QCC_n (\N^{\otimes n}, \rho^{\otimes n}, \epsilon) = \min_{\Pi_n  \in \T_n (\N^{\otimes n}, \rho^{\otimes n}, \epsilon)} QCC (\Pi_n).
\end{align*}
\end{definition}

\begin{definition}
For a bipartite channel $\N \in \C (A_{in} \otimes B_{in}, A_{out} 
\otimes B_{out})$, an input state $\rho \in D(A_{in} \otimes B_{in})$, 
and an error parameter $\epsilon \in [0, 2]$, we define the $\epsilon$-error \emph{amortized 
quantum communication complexity} of $\N$ on input $\rho$ as
\begin{align*}
	AQCC (\N, \rho, \epsilon) =  \limsup_{n \rightarrow \infty} \frac{1}{n} QCC_n (\N^{\otimes n}, \rho^{\otimes n}, \epsilon).
\end{align*}
\end{definition}

Note that for all $n$, $QCC_n (\N^{\otimes n}, \rho^{\otimes n}, \epsilon) \leq n QCC (\N, \rho, \epsilon)$, as is made clear by running $n$ times in parallel a protocol achieving the minimum in the definition of the quantum communication complexity. Hence, the amortized quantum communication complexity is bounded by the quantum communication complexity.

We have corresponding definitions for bounded round complexity.

\begin{definition}
For a bipartite channel $\N \in \C (A_{in} \otimes B_{in}, A_{out} 
\otimes B_{out})$, an input state $\rho \in D(A_{in} \otimes B_{in})$, an
error parameter $\epsilon \in [0, 2]$ and a bound $M \in \mathbb{N}$ on the number of messages, we define the $M$-message, $\epsilon$-error,
\emph{$n$-fold quantum communication complexity} of $\N$ on input $\rho$ as
\begin{align*}
	QCC_n^M (\N^{\otimes n}, \rho^{\otimes n}, \epsilon) = \min_{\Pi_n  \in \T_n^M (\N^{\otimes n}, \rho^{\otimes n}, \epsilon)} QCC (\Pi_n).
\end{align*}
\end{definition}

\begin{definition}
For a bipartite channel $\N \in \C (A_{in} \otimes B_{in}, A_{out} 
\otimes B_{out})$, an input state $\rho \in D(A_{in} \otimes B_{in})$, 
an error parameter $\epsilon \in [0, 2]$ and a bound $M \in \mathbb{N}$ on the number of messages, we define the $M$-message, $\epsilon$-error \emph{amortized 
quantum communication complexity} of $\N$ on input $\rho$ as
\begin{align*}
	AQCC^M (\N, \rho, \epsilon) =  \limsup_{n \rightarrow \infty} \frac{1}{n} QCC_n^M (\N^{\otimes n}, \rho^{\otimes n}, \epsilon).
\end{align*}
\end{definition}

\section{Different Perspective on Classical Information Cost}
	\label{sec:bas}

Before diving into the definition of quantum information cost 
and the properties of such a definition, we first present a different perspective
on the classical information cost that is both 
natural and more amenable to a quantum generalization.
Taking this perspective leads to an alternate proof of its operational interpretation
 as the amortized (distributional) communication cost.
The main difference from the standard definition is not so much in the formal rewriting of
this definition, which to some extent was already implicitly used in previous proofs \cite{BR11, Bra12a} and is simply an application of
the chain rule  and basic properties of mutual information. It is rather in the interpretation of every message transmission
as the simulation of a noisy channel (the generation of the message from the input, previous messages, and randomness)
with feedback to the sender and side information at the receiver, a
variant of the setting of the classical reverse Shannon theorem studied in the information theory literature \cite{BSST02, BDHSW09}.
Using a quantum analogue of the reverse Shannon theorem with side information at the receiver \cite{DY08, YD09}, 
this local description of information cost then circumvent usual difficulties in defining a quantum analogue of a transcript,
and leads to a generalization of information complexity with the desired properties.

We consider a $N$-message classical communication protocol $\Pi$,
along with a distribution $\mu$ over the inputs $ (X, Y)$,
and public randomness $R$. The protocol is defined by a sequence of conditional 
random variables $M_i$ taking value in the sample space
$\{0,1\}^*$, with some suitable constraints to enforce that protocols are well-defined.
For random variable $M, I$, we denote by $M|I$ the table of conditional transition probabilities
$p_{M|I} (M =m | I= i)$ that gives the probability to obtain output $M = m$ given input $I=i$.
Then, on input random variables $(X, Y)$, $\Pi$ is defined by
$M_1 | X R^A, M_2 | M_1^B Y R^B, M_3 | M_2^A M_1^A X R^A, \cdots,
 M_{N} | M_{N-1}^A 
\cdots 
M_1^A Y R^A$, with the superscripts denoting whose copy of a random variable we are considering.
We denote the transcript by $\Pi(x,y) = 
r \cdot m_1 \cdot m_2 \cdots m_{N}$, and the corresponding random 
variable by $\Pi(X,Y)$. The communication complexity of $\Pi$ is 
defined as $CC(\Pi) = \max | m_1 \cdot m_2 \cdots m_{N} |$, 
in which the maximum for the length is taken over all input pairs $(x, y) \in (X,Y)$, 
over all public randomness $r \in R$, and all conditional transcript 
$m_1 m_2 ... m_{N} \in M_1 M_2 \cdots M_{N} | (X,Y,R) = (x,y,r)$ 
(we only consider events with non-zero probability). Sometimes it is also
defined by taking the average length instead of the maximum; this does not affect the results here.

The standard definition of the information cost is then as the sum of two conditional mutual informations,
\begin{align}
IC_\mu (\Pi) = I(\Pi(X,Y); Y|X)+ I(\Pi(X,Y); X|Y),
\end{align}
 and the 
corresponding intuition for this quantity is that 
it represents the amount of information leaked by the transcript 
to Alice about Bob's input plus that leaked to Bob about Alice's input. 
This definition is shown in \cite{BR11} to lead to an equality between information complexity and 
 the distributional amortized communication 
complexity. We would like to show such a theorem 
about an analogous quantum information cost quantity.
However, many problems seem to arise when trying to generalize the 
above quantity to the quantum setting. 
Before proposing such a definition, 
we first give an alternate characterization of classical information cost, 
along with the corresponding operational intuition, which will be more 
amenable to a quantum generalization. Note that an interesting consequence of
 our perspective versus
those on sampling complexity that have been studied before is that
it enables us to use some previously proved tools from information theory, namely (tensor power source) classical reverse Shannon theorems.

We define an alternate information cost for classical communication 
protocols as 
\begin{align}
IC_\mu^\prime  = I(M_1^B; &X R^A | Y R^B) + 
I(M_2^A ; Y M_1^B R^B | X M_1^A R^A)  \\
&+ I(M_3^B ; X M_2^A M_1^A R^A | Y M_2^B M_1^B R^B) + \cdots  \\
&+ I(M_N^A; Y M_{N-1}^B \cdots M_1^B R^B | X M_{N-1}^A \cdots M_1^A R^A),
\end{align} 
in which we distinguish 
between Alice's and Bob's copy of the public 
randomness $R$ and messages $M_i$. Note that this is easily seen to be equivalent to the standard definition for
$IC_\mu$, by using the chain rule for mutual information along 
with the fact that $M_i^A, M_i^B$ and $R^A, R^B$
are just copies of one another. However, it is not 
so much in the formal statement that the rewriting is interesting, 
but in the operational interpretation. Indeed, the above characterization 
comes from viewing each conditional message $M|I$ in the protocol as 
a noisy channel in which the output $M$ is sent over a noiseless channel to a receiver
who has side information $S$ about the input $I$ to the channel, but for which
also a copy $M_F$ of the output is given as feedback to the sender.
The problem of simulating the sending of the output of
 a noisy channel with feedback has been studied in the literature under
 the name of classical reverse Shannon theorem \cite{BSST02, BDHSW09}, and when there is side information $S$
 at the receiver, $I(M;I|S)$ characterizes the amount of information that 
needs to be sent over the noiseless channel from sender to receiver. 
It is shown in \cite{LD09} that asymptotically, this task can be accomplished 
at a (unidirectional) classical communication rate of $I(M;I|S)$ when 
sufficient shared randomness is present. In \cite{BR11}, a correlated sampling 
protocol is used to perform a similar task in a one-shot setting, but such 
that the same communication efficiency is achieved on average, to first order. 
A caveat is that their protocol to do so is interactive, while the one
 in \cite{LD09} is not. This result yields a simulation protocol for amortized 
communication that asymptotically achieves communication at the information
cost of the protocol, while keeping the same round complexity,  and error parameter arbitrarily close
to the original one.
Another nice property of reverse Shannon theorems is that they also give
bounds on the amount of extra shared randomness required in the asymptotic limit for channel simulation.
We do not give the details of the proof here, it follows along the same line as 
the one  for the quantum case. We also do not discuss optimality 
here (see the quantum case, or \cite{BR11, Bra12a} for
such discussions).


\section{Quantum Information Cost and Complexity}
\label{sec:qic}

Finally, we are ready to define quantum information cost and complexity !
As we already said, the notion of information cost of a protocol does
not seem to easily extend to the quantum setting, mainly due to the 
fact that there is no direct analogue for a transcript in the quantum setting.
Also, the reversibility of 
quantum computation allows for protocols in which nothing remains at the 
end of the distributed computation except for the initial inputs and the 
output of the function evaluated on these inputs (up to some small error for approximate protocols). 
These issues brought 
Braverman \cite{Bra12a} to wonder what was the right quantum analogue of 
information cost, and whether there always existed protocols for computing 
binary functions that had quantum information cost bounded by a constant. 
Some attempts at trying to define a quantum analogue of information 
cost have appeared before \cite{JRS03, JN13}, and have proven useful for 
tackling particular problems. However, none of these seems to define the right 
notion in a context of amortized communication complexity.
In particular there is an implicit dependence on the round complexity 
for these notions, and they only provide a lower bound on the
 communication cost once divided by the number of rounds. Defining such a notion 
would hopefully lead to an interesting tool, in particular to obtain lower bounds and to tackle direct sum 
questions in quantum communication complexity. The main goal of our work 
is to present such a notion in a strong sense: we define a notion of 
quantum information complexity which is exactly equal to the amortized 
communication complexity. 

From the alternate definition of the information cost in the preceding section, 
we can more easily extend it to a notion of quantum information cost
for quantum protocols. A thing that might still cause problem is that
we cannot keep a copy of a channel input and output at the sender.
These issues have already been discussed in particular when discussing connections
between the fully quantum Slepian-Wolf theorem \cite{ADHW09} and the fully
quantum reverse Shannon theorem \cite{ BDHSW09, BCR11}, 
and when discussing the fully quantum generalization of channel simulation
with side information at the receiver \cite{LD09, DY08, YD09}. The correct quantum analogue of
this is that what stays at the sender is the coherent feedback of the environment output
of the noisy channel's isometric extension, and what is to be transmitted at the receiver is
the usual output of the channel. In our situation, there is also another system in play,
the side information already in the possession of the receiver before the transmission.
The correct problem to consider in this case is then quantum state redistribution \cite{LD09, DY08, YD09}.
In quantum state redistribution, there are $4$ systems of interest. 
At the outset of the protocol, the $A, C$ systems are in the possession of Alice, and would be
for us the coherent feedback of the noisy channel and the output to be transmitted, respectively,
while Bob holds the side information $B$, and the $ABC$ joint system is purified by a 
reference register $R$ that no party has access to. Thus, the only system changing hands 
is the $C$ subsystem that is to be transmitted from Alice to Bob. It is proved in \cite{DY08, YD09} that this
can be accomplished, in the limit of asymptotically many copies of this task, at a communication cost of
$\frac{1}{2} I(R; C | B)$ qubits per copy, along with an entanglement cost 
of $\frac{1}{2} I(C; A) - \frac{1}{2} I(C; B)$ ebits per copy
(with entanglement generated instead of consumed if this is negative). 
We state a precise formulation of this theorem.
\begin{theorem} (State redistribution \cite{DY08, YD09})
	\label{th:yd}
	For any $\epsilon, \delta > 0$, any state $\rho^{ABC}$ and any 
purification $\rho^{ABCR}$, any quantum communication rate 
$Q > \frac{1}{2} I(C; R | B)$ and entanglement consumption 
(or generation if negative) rate $E$ satisfying $Q + E > H(C | B)$, 
there is a large enough $n_0$ such that for all $n \geq n_0$, 
there exist an encoder $E \in C(A^{\otimes n} \otimes 
C^{\otimes n} \otimes T_A^{in}, A^{\otimes n} \otimes 
\hat{C} \otimes T_A^{out})$ and a decoder $D \in 
C(B^{\otimes n} \otimes \hat{C} \otimes T_B^{in}, B^{\otimes n} 
\otimes C^{\otimes n} \otimes T_B^{out})$ such that 
$\dim (T_A^{in}) = \dim (T_B^{in}) = 2^{\lceil \max (E, \delta) n \rceil}, 
\dim (T_A^{out}) = \dim (T_B^{out}) = 2^{- \lceil \min 
(0, E) n \rceil}, \dim (\hat{C}) = 2^{\lceil Q n \rceil}$, $\psi_{in}^{T_A^{in} T_B^{in}}, \psi_{out}^{T_A^{out} T_B^{out}}$ are maximally entangled states in $T_A^{in} \otimes T_B^{in}, T_A^{out} \otimes T_B^{out}$, respectively, and
\begin{align}
\| \Tra{T_A^{out} T_B^{out}} \circ D \circ E (\rho^{\otimes n} \otimes \psi_{in}) - \rho^{\otimes n} \|_{A^{\otimes n} B^{\otimes n} C^{\otimes n} R^{\otimes n}} \leq \epsilon, \\
\| \Tra{\neg T_A^{out} T_B^{out}} \circ D \circ E (\rho^{\otimes n} \otimes \psi_{in}) - \psi_{out} \|_{T_A^{out} T_B^{out}} \leq \epsilon.
\end{align}
\end{theorem}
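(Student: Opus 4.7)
The plan is to prove state redistribution via the decoupling approach, which has become a standard technique in quantum Shannon theory and underlies the original Devetak--Yard argument. The key insight is that if Alice can apply an operation on her systems $A^{\otimes n}$, $C^{\otimes n}$ together with her half of the preshared entanglement $T_A^{in}$ such that, after sending a register $\hat{C}$ of dimension $2^{\lceil Q n \rceil}$ to Bob, the residual state on Alice's side is approximately product with the reference $R^{\otimes n}$, then by Uhlmann's theorem Bob can apply an isometry on $B^{\otimes n} \otimes \hat{C} \otimes T_B^{in}$ producing $B^{\otimes n} \otimes C^{\otimes n} \otimes T_B^{out}$ with the global state close to the target purification (which also fixes the final maximally entangled state $\psi_{out}$ on $T_A^{out} T_B^{out}$).

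More concretely, I would have Alice apply a Haar-random unitary $U$ jointly on $C^{\otimes n} \otimes T_A^{in}$, decomposing the output as $\hat{C} \otimes T_A^{out}$, with $\log \dim(\hat{C}) = \lceil Q n \rceil$ and $\log \dim(T_A^{out})$ chosen so that the overall dimension equation is satisfied (this is where the entanglement rate $E$ enters, with $E \geq 0$ corresponding to consumed entanglement and $E < 0$ to generated entanglement). The decoupling theorem, applied to the purification of $\rho^{\otimes n} \otimes \psi_{in}$, implies that on average over the Haar measure the reduced state on $V = A^{\otimes n} \otimes T_A^{out}$ jointly with $R^{\otimes n}$ is close to a product state whenever $\log \dim \hat{C}$ exceeds roughly $\tfrac{1}{2} I(C; R | B)\, n$; the side information $B$ enters by letting us work on the conditionally typical subspace of $C^{\otimes n}$ given $B^{\otimes n}$, which effectively replaces a marginal entropy of $R$ by a conditional one.

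Once this decoupling estimate is in hand, Uhlmann's theorem supplies Bob's decoding isometry: any two purifications of the same mixed state are related by an isometry acting on the purifying side. The condition $Q + E > H(C|B)$ enters here to ensure that Bob's purifying space $B^{\otimes n} \otimes \hat{C} \otimes T_B^{in}$ is large enough to carry the target purification $B^{\otimes n} \otimes C^{\otimes n} \otimes T_B^{out}$, and it simultaneously guarantees the second error bound concerning $\psi_{out}$ by forcing the residual on $T_A^{out} T_B^{out}$ to be close to maximally mixed/entangled. A standard derandomization step then replaces the Haar-random $U$ with a fixed unitary achieving both error bounds, completing the construction of the encoder--decoder pair.

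The main technical obstacle is establishing the conditional decoupling estimate sharply enough that the exact rate $\tfrac{1}{2} I(C; R | B)$ appears in the bound. The cleanest route is via second-moment calculations over the Haar measure combined with typical subspace projections onto the conditionally typical subspace of $C^{\otimes n}$ relative to $B^{\otimes n}$; alternatively, one can invoke the one-shot decoupling theorem as a black box and specialize to the iid setting, recovering the asymptotic rates from the asymptotic equipartition property applied to the relevant smooth min-entropies. Either route avoids explicit codebook construction, but some care is needed to handle the ancillary entanglement registers $T_A^{in}, T_A^{out}, T_B^{in}, T_B^{out}$ without inflating the quantum communication beyond the claimed $\lceil Q n \rceil$ qubits and to ensure the two error conditions in the theorem statement are simultaneously met by a single pair $(E, D)$.
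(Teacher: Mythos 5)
First, note that the paper does not prove Theorem~\ref{th:yd} at all: it is imported verbatim from \cite{DY08, YD09} and used as a black box, so there is no ``paper's own proof'' to compare against; your proposal has to be judged against the actual content of the Devetak--Yard result.

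Judged that way, there is a genuine gap at the central step. Your decoupling criterion is misstated: you ask that $A^{\otimes n} \otimes T_A^{out}$ be approximately product with $R^{\otimes n}$, but the random unitary acts only on $C^{\otimes n} \otimes T_A^{in}$, so it cannot remove the correlations that $A$ already has with $R$ in $\rho^{ABCR}$ --- and it must not, since in the target state Alice keeps $A$ with its correlations to $R$ intact. The correct criterion is that $T_A^{out}$ alone be decoupled from $A^{\otimes n} R^{\otimes n}$, so that Bob, who holds a purification of $A^{\otimes n} R^{\otimes n} T_A^{out}$, can apply an Uhlmann isometry. But even with that correction, a single application of the decoupling theorem (with a typical projection on $C^{\otimes n}$) forces $\log \dim \hat{C} \gtrsim \frac{n}{2} I(C; AR)$, whereas the claimed rate is $\frac{1}{2} I(C;R|B) = \frac{1}{2} I(C;R|A) = \frac{1}{2}\bigl( I(C;AR) - I(C;A) \bigr)$, which is smaller by $\frac{1}{2} I(C;A) \geq 0$. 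Recovering that $\frac{1}{2}I(C;A)$ saving is exactly the hard part of state redistribution; your appeal to ``the conditionally typical subspace of $C^{\otimes n}$ given $B^{\otimes n}$'' does not supply it (Alice does not hold $B$, and conditional typicality does not change which systems the garbage must be decoupled from). The known proofs close this gap with a genuinely additional idea --- e.g.~composing two coherent state-merging (FQSW) steps, a time-reversal/resource-inequality argument as in \cite{DY08, YD09}, or a two-stage decoupling that first splits $C$ into a part correlated with $A$ and a part correlated with $RB$. Without one of these, the sketch establishes redistribution only at the suboptimal rate $\frac{1}{2} I(C;AR)$, which is not the statement of the theorem. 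The remaining ingredients you list (Uhlmann, dimension counting via $Q + E > H(C|B)$, derandomization) are standard and fine once the correct decoupling structure is in place.
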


Now, in analogy with our
rewriting of the classical information cost, we define the quantum information 
cost of a protocol, for a protocol $\Pi$ as defined in section \ref{sec:qucomm}, in the following way.

\begin{definition}
For a protocol $\Pi$ and an input state $\rho$, 
we define the \emph{quantum information cost} of $\Pi$ on input $\rho$ as
\begin{align*}
	QIC (\Pi, \rho) = \sum_{i>0, odd} \frac{1}{2} I(C_i; R | B_{i - 1}) + \sum_{i>0, even} \frac{1}{2} I(C_i; R | A_{i - 1}),
\end{align*}
in which we have labelled $B_0 = B_{in} \otimes T_B$.
\end{definition}

Note that even for protocols with non-zero communication, the quantum information cost on pure state input is zero, since
the purifying $R$ register is trivial in such a case. The corresponding notion of quantum information complexity of a channel is then:

\begin{definition}
For a bipartite channel $\N \in \C (A_{in} \otimes B_{in}, A_{out} \otimes B_{out})$, 
an input state $\rho~\in~\D(A_{in} \otimes B_{in})$ and an error parameter $\epsilon \in [0, 2]$, 
we define the $\epsilon$-error \emph{quantum information complexity} of $\N$ on input $\rho$ as
\begin{align*}
	QIC (\N, \rho, \epsilon) = \inf_{\Pi \in \T (\N, \rho, \epsilon)} QIC (\Pi, \rho).
\end{align*}
\end{definition}

We have the following operational interpretation for quantum information complexity as
the amortized quantum communication complexity.

\begin{theorem}
\label{th:qic}
For a bipartite channel $\N \in \C (A_{in} \otimes B_{in}, A_{out} \otimes B_{out})$, 
an input state $\rho~\in~\D(A_{in} \otimes B_{in})$ and an error parameter $\epsilon \in (0, 2]$, 
\begin{align*}
QIC (\N, \rho, \epsilon) = AQCC (\N, \rho, \epsilon).
\end{align*}
\end{theorem}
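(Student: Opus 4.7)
The plan is to prove the equality by two matching inequalities, both rooted in Theorem~\ref{th:yd} (quantum state redistribution). For the achievability direction $AQCC(\N,\rho,\epsilon) \leq QIC(\N,\rho,\epsilon)$, I would fix some $\epsilon' < \epsilon$ and pick $\Pi \in \T(\N,\rho,\epsilon')$ with $QIC(\Pi,\rho)$ within an arbitrary $\delta > 0$ of the infimum $QIC(\N,\rho,\epsilon')$. I would then run $\Pi$ conceptually in parallel on $n$ copies of the input but replace each of its $M$ noiseless transmissions by a quantum state redistribution over the $n$ copies. At an odd round $i$, once Alice applies $U_i$ locally, the joint state on $(A_i, C_i, B_{i-1}, R)$ is pure, so Theorem~\ref{th:yd} applies with $(A,B,C) = (A_i, B_{i-1}, C_i)$ and transfers $C_i^{\otimes n}$ from Alice to Bob at asymptotic rate $\tfrac{1}{2}I(C_i; R | B_{i-1})$ qubits per copy, drawing any needed entanglement from the free shared resource. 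Summing over all $M$ rounds (with the parity swap for even $i$) gives total quantum communication $n \, QIC(\Pi,\rho) + o(n)$.

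The chief obstacle in this direction is the error analysis, because the intermediate state between successive rounds is only approximately the one that the original protocol would produce. By monotonicity of the trace distance under CPTP maps and the triangle inequality, round errors compose additively, so I would budget the per-round redistribution error to at most $(\epsilon-\epsilon')/(2M)$, which is achievable for $n$ large by Theorem~\ref{th:yd}. This keeps the overall error against each single-copy reference $R_j$ bounded by $\epsilon$, placing the compressed protocol in $\T_n(\N^{\otimes n},\rho^{\otimes n},\epsilon)$. Letting $n\to\infty$ yields $AQCC(\N,\rho,\epsilon) \leq QIC(\N,\rho,\epsilon') + \delta$; sending $\delta\to 0$ and $\epsilon'\to\epsilon^-$, and invoking the continuity of $QIC(\N,\rho,\cdot)$ (listed among the advertised properties of the definition), then completes this direction.

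For the converse $QIC(\N,\rho,\epsilon) \leq AQCC(\N,\rho,\epsilon)$, fix any $\Pi_n \in \T_n(\N^{\otimes n},\rho^{\otimes n},\epsilon)$. The termwise bound $\tfrac{1}{2} I(C_i; R | \cdot) \leq \log\dim(C_i)$ gives $QIC(\Pi_n,\rho^{\otimes n}) \leq QCC(\Pi_n)$ at once. The remaining step is a direct-sum identity for the information cost. For each $j \in [n]$, I would define a single-copy protocol $\Pi^{(j)}$ as follows: Alice and Bob use free shared entanglement to jointly prepare the $n-1$ decoy copies of the purification $\rho^{A_{in} B_{in} R}$, placing $R_k$ in Alice's extra side register for $k<j$ and in Bob's extra side register for $k>j$; they then insert the real input at position $j$ and run $\Pi_n$, keeping only the $j$-th output pair. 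The per-copy error condition on $\Pi_n$ forces $\Pi^{(j)} \in \T(\N,\rho,\epsilon)$. For each odd round $i$, the chain rule yields
\begin{align*}
\sum_{j=1}^n I(C_i; R_j | R_{>j}, B_{i-1}) = I(C_i; R_1 \cdots R_n | B_{i-1}),
\end{align*}
and symmetrically for even $i$ with $R_{<j}$ and $A_{i-1}$; summing over $i$ gives $\sum_{j=1}^n QIC(\Pi^{(j)},\rho) = QIC(\Pi_n,\rho^{\otimes n})$. Pigeonhole then picks a $j^*$ with $QIC(\Pi^{(j^*)},\rho) \leq QCC(\Pi_n)/n$, so $QIC(\N,\rho,\epsilon) \leq QCC_n(\N^{\otimes n},\rho^{\otimes n},\epsilon)/n$ for every $n$; taking $\limsup_n$ on the right concludes.
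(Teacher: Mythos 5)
Your proposal is correct and follows essentially the same route as the paper: achievability by compressing each round of a near-optimal protocol via quantum state redistribution (Theorem~\ref{th:yd}) with a per-round error budget controlled by the triangle inequality and monotonicity of trace distance, plus continuity of $QIC$ in the error parameter; and the converse via $QIC(\Pi_n,\rho^{\otimes n})\leq QCC(\Pi_n)$ together with the direct-sum identity obtained by embedding the $n-1$ decoy inputs as shared entanglement with the purifying registers split between the parties and applying the chain rule. The paper packages the latter as an exact additivity statement (Corollary~\ref{cor:add}) rather than a pigeonhole average, but the underlying construction and chain-rule computation are identical to yours.
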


The proof of the theorem is given in section \ref{sec:qicam}. We can also obtain bounded round variants.

\begin{definition}
For a bipartite channel $\N \in \C (A_{in} \otimes B_{in}, A_{out} \otimes B_{out})$, 
an input state $\rho~\in~\D(A_{in} \otimes B_{in})$, an error parameter $\epsilon \in [0, 2]$ and a bound $M \in \mathbb{N}$ on the number of messages, 
we define the $M$-message, $\epsilon$-error \emph{quantum information complexity} of $\N$ on input $\rho$ as
\begin{align*}
	QIC^M (\N, \rho, \epsilon) = \inf_{\Pi \in \T^M (\N, \rho, \epsilon)} QIC (\Pi, \rho).
\end{align*}
\end{definition}

\begin{theorem}
\label{th:qicbndr}
For a bipartite channel $\N \in \C (A_{in} \otimes B_{in}, A_{out} \otimes B_{out})$, 
an input state $\rho~\in~\D(A_{in} \otimes B_{in})$, an error parameter $\epsilon \in (0, 2]$ and a bound $M \in \mathbb{N}$ on the number of messages, 
\begin{align*}
QIC^M (\N, \rho, \epsilon) = AQCC^M (\N, \rho, \epsilon).
\end{align*}
\end{theorem}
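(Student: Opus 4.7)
The plan is to adapt the proof of Theorem~\ref{th:qic}, tracking the message count throughout, so as to show both directions $AQCC^M(\N,\rho,\epsilon) \leq QIC^M(\N,\rho,\epsilon)$ and $QIC^M(\N,\rho,\epsilon) \leq AQCC^M(\N,\rho,\epsilon)$ via constructions that preserve the bound of $M$ messages.

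For the achievability direction $AQCC^M \leq QIC^M$, I would start from any $\Pi \in \T^M(\N,\rho,\epsilon')$ whose quantum information cost is within $\delta$ of $QIC^M(\N,\rho,\epsilon')$, with $\epsilon'<\epsilon$ chosen to absorb the simulation error. The $n$-fold simulation $\Pi_n$ runs $n$ coherent copies of $\Pi$ in parallel on $\rho^{\otimes n}$ but replaces each physical transmission of the block $C_i^{\otimes n}$ by a state-redistribution sub-protocol (Theorem~\ref{th:yd}): at the $i$-th message, Alice holds her coherent feedback $A_{i-1}^{\otimes n}$ together with $C_i^{\otimes n}$ to be transmitted, Bob holds $B_{i-1}^{\otimes n}$ as side information, and the joint state is purified by $R^{\otimes n}$ together with the residual purifications of state redistribution for earlier messages. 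Theorem~\ref{th:yd} transmits $C_i^{\otimes n}$ at per-copy rate $\tfrac12 I(C_i;R\mid B_{i-1}) + \delta$ within arbitrarily small trace-distance error. The crucial point for the bounded-round version is that each state-redistribution sub-protocol is unidirectional from Alice to Bob (or vice versa), so the $i$-th message of $\Pi$ maps to a single unidirectional block in $\Pi_n$ and the round count stays exactly $M$. Summing over the $M$ messages and letting $\delta \to 0$ and $n \to \infty$ gives $\Pi_n \in \T_n^M(\N^{\otimes n},\rho^{\otimes n},\epsilon)$ with $QCC(\Pi_n)/n \to QIC(\Pi,\rho)$.

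For the converse $QIC^M \leq AQCC^M$, let $\Pi_n \in \T_n^M(\N^{\otimes n},\rho^{\otimes n},\epsilon)$ be a near-optimal $M$-message protocol for $n$ copies. I would build a single-copy $M$-message protocol $\Pi'$ by the quantum version of the classical ``embed into one slot of $n$'' reduction: Alice and Bob use shared classical randomness to sample $K \in [n]$ uniformly, use additional pre-shared entanglement to locally generate purifications of $\rho$ for the $n-1$ input slots other than $K$, embed their actual inputs in slot $K$, run $\Pi_n$, and output the $K$-th pair of output registers. Since the message structure of $\Pi_n$ is untouched, $\Pi'$ uses exactly $M$ messages and preserves the $\epsilon$-error condition on $\rho$ by the per-copy error criterion. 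To bound $QIC(\Pi',\rho)$, I would use the product decomposition $R^{\otimes n} = R_1 \otimes \cdots \otimes R_n$ together with the classical-conditioning rule on $K$ and the chain rule to identify $n \cdot QIC(\Pi',\rho)$ with $QIC(\Pi_n, \rho^{\otimes n})$; combined with the trivial bound $QIC(\Pi_n,\rho^{\otimes n}) \leq QCC(\Pi_n)$ (each summand $\tfrac12 I(C_j; R^{\otimes n}\mid \cdot)$ is at most $\log \dim C_j$), this yields $QIC^M(\N,\rho,\epsilon) \leq QCC(\Pi_n)/n$, and taking the sequence of $\Pi_n$ to $n \to \infty$ gives the bound.

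The main technical obstacle is the direct-sum step in the converse, because the reference $R$ of the single-copy $\Pi'$ purifies only the $K$-th input, while the other $n-1$ purifications enter $\Pi'$ as pre-shared entanglement and thus sit on Alice's or Bob's side of the cut at every step. One must verify that, under this bookkeeping, the per-message contribution $\tfrac12 I(C_j; R \mid B_{j-1}^{(n)}, K)$ of $\Pi'$ matches the conditional mutual information term of $\Pi_n$ in which $R_K$ plays the role of the single-copy reference and $R_{<K}$ is absorbed into $A_{j-1}^{\otimes n}$ or $B_{j-1}^{\otimes n}$. Once this identification is secured, the chain-rule expansion of $I(C_j; R^{\otimes n} \mid \cdot)$ into $n$ such terms, combined with the classical-conditioning rule for $K$, delivers the direct-sum identity and hence the converse bound, exactly as in the unbounded-round case.
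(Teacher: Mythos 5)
Your proposal is correct and follows essentially the same route as the paper, which proves the bounded-round statement by re-running the proof of Theorem \ref{th:qic} while tracking the message count: the achievability direction is exactly the paper's Lemma \ref{lem:coding} (round-preserving, unidirectional state-redistribution per message, plus the continuity corollary to absorb the simulation error), and your converse rests on the same chain-rule direct-sum identity with embedded purified inputs held as pre-shared entanglement that underlies the paper's Lemma \ref{lem:add2} and Corollaries \ref{cor:add} and \ref{cor:qicnqccn}. The only (cosmetic) difference is that you package the $n$ embedded single-copy protocols into one protocol via a coherent selector register $K$ and average, whereas the paper keeps them separate and sums their information costs; both reduce to the same bookkeeping of where each $R_i$ sits relative to the $A/B$ cut.
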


\section{Properties of the Definition}

\subsection{Quantum information lower bounds communication}

In this section, we make the important remark that in any protocol, the quantum information 
cost is non-negative and, more importantly, is a lower bound on the quantum communication cost.
This holds when considering the quantum information cost with respect to
any input state.
This follows from the fact for any quantum state, 
$0 \leq \frac{1}{2} I(C;R|B)  \leq \log \dim (C)$.
Applying this to all terms in the quantum information cost versus all 
terms in the quantum communication cost, we
get the result. A similar results holds for quantum information complexity 
versus quantum communication complexity, by taking infimum on both sides.

\begin{lemma}
\label{lem:qicvsqcc}
	For any protocol $\Pi$ and input state $\rho$, the following holds
\begin{align*}
0 \leq QIC (\Pi, \rho) \leq QCC (\Pi).
\end{align*}
\end{lemma}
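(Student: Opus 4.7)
The plan is to apply two basic information-theoretic bounds from the preliminaries to each term of the quantum information cost separately, then sum. For the lower bound, I would use strong subadditivity, and for the upper bound, the dimensional bound on conditional mutual information.

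For the left-hand inequality, recall that $QIC(\Pi, \rho)$ is a sum of terms of the form $\tfrac{1}{2} I(C_i; R \mid B_{i-1})$ (for odd $i$) and $\tfrac{1}{2} I(C_i; R \mid A_{i-1})$ (for even $i$), each evaluated on the state of the protocol just after the $i$-th unitary has been applied to the initial state $\rho \otimes \psi$. By strong subadditivity, each such conditional mutual information is non-negative, and the factor $\tfrac{1}{2}$ preserves this, so the sum is non-negative.

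For the right-hand inequality, I would use the dimensional bound $I(A;B\mid C) \leq 2 H(A) \leq 2\log \dim(A)$ recorded in the preliminaries, applied with $A = C_i$. This gives $\tfrac{1}{2} I(C_i; R \mid B_{i-1}) \leq \log \dim(C_i)$ for odd $i$, and similarly $\tfrac{1}{2} I(C_i; R \mid A_{i-1}) \leq \log \dim(C_i)$ for even $i$. Summing over $i$ yields
\begin{align*}
QIC(\Pi,\rho) \;\leq\; \sum_i \log \dim(C_i) \;=\; QCC(\Pi).
\end{align*}

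There is no real obstacle here; the lemma is essentially immediate once the definition of $QIC$ is unpacked and the standard bounds on conditional mutual information are invoked term by term. The only bookkeeping is making sure that, for each $i$, the state on which the entropic quantity is evaluated is a valid quantum state (which it is, since it is obtained from the input and shared entanglement by applying the unitaries $U_1, \ldots, U_i$), so that strong subadditivity and the dimensional bound both apply.
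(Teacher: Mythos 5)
Your proof is correct and follows exactly the paper's argument: the paper also bounds each term via $0 \leq \tfrac{1}{2} I(C_i;R\mid B_{i-1}) \leq \log\dim(C_i)$ (non-negativity from strong subadditivity, the upper bound from $I(A;B\mid C)\leq 2H(A)\leq 2\log\dim(A)$) and sums term by term against the terms of $QCC(\Pi)$. Nothing is missing.
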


\begin{corollary}
	For any channel $\N$, any input state $\rho$, any error parameter $\epsilon \in [0, 2]$ and any bound $M \in \mathbb{N}$ on the number of messages, the following holds
\begin{align*}
QIC (\N, \rho, \epsilon) & \leq QCC (\N, \rho, \epsilon), \\
QIC^M (\N, \rho, \epsilon) & \leq QCC^M (\N, \rho, \epsilon).
\end{align*}
\end{corollary}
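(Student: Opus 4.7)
The plan is to exploit the elementary bounds on conditional mutual information recalled in the preliminaries, namely $0 \leq I(A; B \mid C) \leq 2 H(A) \leq 2 \log \dim(A)$, applied term by term to the definition of $QIC(\Pi, \rho)$.

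First, I would fix a protocol $\Pi$ and an input $\rho$ with purification $\rho^{A_{in}B_{in}R}$, and consider the intermediate state on all of Alice's registers, Bob's registers, the current communication register $C_i$, and the reference $R$ just before message $i$ is sent across. By the definition of $QIC$, the contribution of message $i$ is $\tfrac{1}{2} I(C_i; R \mid \cdot)_{\sigma_i}$ for some state $\sigma_i$ determined by applying the first few unitaries of the protocol to $\rho \otimes \psi$. Strong subadditivity (stated in the preliminaries as the non-negativity of conditional mutual information) gives $\tfrac{1}{2} I(C_i; R \mid \cdot) \geq 0$, and the dimension bound $I(C_i; R \mid \cdot) \leq 2 H(C_i) \leq 2 \log \dim(C_i)$ gives $\tfrac{1}{2} I(C_i; R \mid \cdot) \leq \log \dim(C_i)$.

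Summing these inequalities over $i$ yields
\begin{align*}
0 \;\leq\; QIC(\Pi,\rho) \;=\; \sum_{i} \tfrac{1}{2} I(C_i; R \mid \cdot) \;\leq\; \sum_{i} \log \dim(C_i) \;=\; QCC(\Pi),
\end{align*}
which is exactly the content of the lemma. The corollary then follows by taking the infimum of both sides over $\Pi \in \T(\N,\rho,\epsilon)$ (respectively $\T^M(\N,\rho,\epsilon)$): the right-hand side converges to $QCC(\N,\rho,\epsilon)$ (resp.\ $QCC^M(\N,\rho,\epsilon)$) by definition, while the left-hand side is bounded above by the infimum of $QIC(\Pi,\rho)$, which is $QIC(\N,\rho,\epsilon)$ (resp.\ $QIC^M(\N,\rho,\epsilon)$), because for every $\Pi$ in the feasible set the lemma gives $QIC(\Pi,\rho) \leq QCC(\Pi)$.

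Honestly, there is no real obstacle here: the argument is two lines of standard entropy inequalities applied per message. The only thing to be mildly careful about is making sure that the state on which the conditional mutual information is evaluated is precisely the state at the appropriate time step of the protocol, so that $C_i$ really is a register of dimension $\dim(C_i)$ at that moment; this is immediate from the protocol formalism of Section \ref{sec:qucomm}, where each $C_i$ is a fixed register exchanged between the $i$-th and $(i{+}1)$-th unitary. The corollary is then a routine ``take $\inf$ on both sides'' step.
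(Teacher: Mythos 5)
Your argument is correct and is essentially identical to the paper's: the lemma is obtained by applying $0 \leq \tfrac{1}{2} I(C_i; R \mid \cdot) \leq \log \dim(C_i)$ term by term and summing, and the corollary follows by taking the infimum over protocols in $\T(\N,\rho,\epsilon)$ (resp.\ $\T^M(\N,\rho,\epsilon)$) on both sides. Nothing is missing.
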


\subsection{Quantum information upper bounds amortized communication}

We now prove some kind of converse result to the one in the previous section:
 the quantum information cost is an upper bound on
the amortized quantum communication cost.

\begin{lemma}
\label{lem:coding}
For any $M$-message protocol $\Pi$, any input state $\rho$ and any $\epsilon \in (0, 2], \delta > 0$,
there exists a large enough $n_0$ such that for any $n \geq n_0$, there exists
a protocol $\Pi_n \in \T^M (\Pi^{\otimes n}, \rho^{\otimes n}, \epsilon)$ satisfying
\begin{align*}
\frac{1}{n} QCC (\Pi_n) \leq QIC(\Pi, \rho) + \delta.
\end{align*}
\end{lemma}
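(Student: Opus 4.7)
The plan is to compress each of the $M$ messages of the protocol $\Pi$ using Theorem~\ref{th:yd} (state redistribution), applied in parallel across the $n$ copies. Concretely, run the protocol $\Pi$ on $\rho^{\otimes n}$ conceptually, but replace each direct transmission of the message register $C_i^{\otimes n}$ by a state-redistribution subroutine. For an odd (Alice-to-Bob) round $i$, identify the state-redistribution systems as follows: Alice holds $A_{i-1}^{\otimes n}$ (the residual memory, playing the role of $A$ in Theorem~\ref{th:yd}) together with the freshly produced $C_i^{\otimes n}$ (playing the role of $C$); Bob holds $B_{i-1}^{\otimes n}$ (playing the role of $B$); and the reference register is $R^{\otimes n}$ together with all systems that have already been transferred. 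By Theorem~\ref{th:yd}, for any $\delta' > 0$ and any $\epsilon' > 0$, this message can be sent at a quantum communication rate arbitrarily close to $\tfrac{1}{2}I(C_i;R\mid B_{i-1})+\delta'$ per copy with trace-distance error at most $\epsilon'$, using some extra pre-shared entanglement (the amount of which we do not care about, since $QCC$ counts only quantum communication). The even rounds are handled symmetrically with Alice and Bob swapped.

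Next I would set parameters. Pick $\delta' = \delta/M$ and $\epsilon' = \epsilon/(2M)$, so that after summing the per-round compressed rates I obtain
\begin{align*}
\frac{1}{n} QCC(\Pi_n) \leq \sum_{i=1}^M \bigl(\tfrac{1}{2}I(C_i;R\mid B_{i-1}) + \delta'\bigr) = QIC(\Pi,\rho) + \delta,
\end{align*}
for all $n$ large enough (here $n_0$ is the maximum of the $n_0$'s required across the $M$ invocations of Theorem~\ref{th:yd}). Crucially, since each state redistribution subroutine uses only unidirectional communication (in the correct sender-to-receiver direction for that round), the compressed protocol $\Pi_n$ still has exactly $M$ messages and lies in $\T^M(\Pi^{\otimes n}, \rho^{\otimes n}, \epsilon)$ by construction (once we verify the error).

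The main technical obstacle is bounding the accumulated error. After the first compressed round, the joint state on $A_1^{\otimes n}B_1^{\otimes n}R^{\otimes n}$ (including both Alice's residual coherent feedback and Bob's newly held message copies) is $\epsilon'$-close in trace distance to the ideal state produced by $U_1^{\otimes n}$. Bob then applies $U_2^{\otimes n}$ to his portion; by unitary invariance this preserves trace distance. The second state redistribution is applied to this slightly perturbed state rather than the ideal input, but by monotonicity of trace distance under quantum channels (and the fact that the state-redistribution encoder/decoder are channels), the additional error it introduces on the ideal state is at most $\epsilon'$. Iterating $M$ times and using the triangle inequality bounds the total error by $M \epsilon' \leq \epsilon/2 \leq \epsilon$. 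Monotonicity of trace distance then ensures the output restricted to $A_{out}^{\otimes n} B_{out}^{\otimes n} R^{\otimes n}$ is within $\epsilon$ of $\Pi(\rho)^{\otimes n}$, which is exactly the $n$-fold product error criterion required for membership in $\T^M(\Pi^{\otimes n},\rho^{\otimes n},\epsilon)$.

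One subtlety worth flagging: Theorem~\ref{th:yd} is stated for an i.i.d.\ source $\rho^{\otimes n}$, but the state on which the $i$-th round's state redistribution must be applied is, strictly speaking, only approximately i.i.d.\ because of errors introduced by earlier rounds. I would handle this by applying Theorem~\ref{th:yd} formally to the \emph{ideal} i.i.d.\ state produced by running $U_1^{\otimes n},\dots,U_{i-1}^{\otimes n}$ with exact transmissions, obtaining encoders/decoders that work with error $\epsilon'$ on that ideal input, and then invoke monotonicity of trace distance under these encoder/decoder channels to transfer the guarantee to the actual perturbed input with only an additive $\epsilon'$ extra error per round. This cleanly decouples the compression rate analysis (done on the ideal i.i.d.\ state) from the error analysis (done by triangle inequality and monotonicity), and yields the stated bound.
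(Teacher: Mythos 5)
Your proposal is correct and follows essentially the same route as the paper: compress each round blockwise via the Devetak--Yard state redistribution theorem at rate $\tfrac{1}{2}I(C_i;R\mid B_{i-1})+\delta/M$ per copy, and control the accumulated error by a hybrid argument combining the triangle inequality with monotonicity of trace distance under the encoder/decoder channels. The only blemishes are cosmetic: for an odd round the residual memory accompanying $C_i$ is $A_i$ rather than $A_{i-1}$, and since the global state after round $i$ is pure on $A_iC_iB_{i-1}R$ the reference system for the redistribution is exactly $R$ (not $R$ together with previously transferred systems, which already sit inside $B_{i-1}$).
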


\begin{proof}
Given any $M$-message protocol $\Pi$ and any state $\rho^{A_{in} B_{in} R}$, let 
\begin{align*}
\rho_1^{A_1 C_ 1 B_0 R} = U_1 (\rho \otimes \psi), \rho_
2^{A_1 C_2 B_2 R} = U_2 (\rho_1), \cdots,  \rho_M^{A_{M-1} 
C_M B_M R} = U_M (\rho_{M-1})
\end{align*}
 in which we label  
$B_0 = B_{in} \otimes T_B, B_M = B_{out} \otimes B^\prime$. 
Then, for any $\epsilon, 
\delta > 0$, take $Q_i = \frac{1}{2} 
I(C_i ; R| B_{i-1}) + \frac{\delta}{2 M}$ and $F_i = 
\frac{1}{2} \max (0, I(C_i ; A_{i}) - I(C_i; B_{i-1}) ) + \frac{\delta}{2 M}$ for 
$i$ odd
(we do not worry here about reusing the possibly generated entanglement, and simply discard it in the encoding and decoding
maps to be defined below), $Q_i = \frac{1}{2} I(C_i ; R | A_{i-1}) + \frac{\delta}{2 M}$ 
and $F_i = \frac{1}{2} \max (0,  I(C_i ; B_{i}) - I(C_i; A_{i-1}) ) + 
\frac{\delta}{2 M}$ for $i$ even,
 and for each $i$ let $n_0^i$ be
the corresponding $n_0$ for error $\frac{\epsilon}{M}$ in Theorem \ref{th:yd}, and take $n_0 = \max \{ n_0^i \}$.
Then for any $n \geq \max (n_0, \frac{2 M}{\delta})$, we have encoding and decoding maps $E_i, D_i$, along with corresponding entanglement
$\psi_i \in D(T_A^i \otimes T_B^i)$  and communication register $\hat{C}^i$ of size $\dim \hat{C}^i = 2^{\lceil Q_i n \rceil}$,
with each satisfying
\begin{align}
\label{eq:encdec1}
\| D_i \circ E_i (\rho_i^{\otimes n} \otimes \psi_i) - \rho_i^{\otimes n} 
\|_{A_{i }^{\otimes n} C_i^{\otimes n} B_{i-1 }^{\otimes n} R^{\otimes n}} \leq \frac{\epsilon}{M}
\end{align}
 for odd $i$, or 
\begin{align}
\label{eq:encdec2}
\| D_i \circ E_i (\rho_i^{\otimes n} \otimes \psi_i) - \rho_i^{\otimes n} 
\|_{A_{i-1}^{\otimes n} C_i^{\otimes n} B_{i}^{\otimes n} R^{\otimes n}} \leq \frac{\epsilon}{M}
\end{align} 
for even $i$.
Let $\hat{E}_i, \hat{D}_i$ be unitary extensions of $E_i, D_i$, respectively, requiring ancillary states
$\sigma_i^E \in \D (E_i^{in}), \sigma_i^D \in \D (D_i^{in})$.
We define the following protocol $\Pi_n$ starting from the protocol $\Pi$.

\hrulefill

Protocol $\Pi_n$ on input $\sigma$: \\
-Take entangled state $\hat{\psi} = 
\psi^{\otimes n} \otimes \psi_1 \otimes \sigma_1^E \otimes \sigma_1^D \otimes \cdots \otimes \psi_M \otimes \sigma_M^E \otimes \sigma_M^D$. \\
-Take unitaries $\hat{U}_1 = \hat{E}_i \circ U_1^{\otimes n}, 
\hat{U}_2 = \hat{E}_2 \circ U_2^{\otimes n} \circ \hat{D}_1, \cdots, 
\hat{U_M} = \hat{E}_M \circ U_M^{\otimes n} \circ \hat{D}_{M-1},
\hat{U}_{M+1} =  U_{M+1}^{\otimes n} \circ \hat{D}_{M}$  \\
-Take as output the $A_{out}^{\otimes n},  B_{out}^{\otimes n}$ registers.

\hrulefill

Note that the communication cost of $\Pi_n$ satisfies
\begin{align*}
QCC (\Pi_n) & = \sum_i \log \dim (\hat{C}^i) \\
	&= \sum_i \lceil Q_i n \rceil \\
	& \leq  n(\sum_{i>0, odd} \frac{1}{2} I(C_i; R | B_{i - 1}) + \sum_{i>0, even} \frac{1}{2} I(C_i; R | A_{i - 1}) + \frac{ M \delta}{2 M} + \frac{M}{n}) \\
	& \leq n (QIC (\Pi, \rho) + \delta).
\end{align*}
This is also a $M$-message protocol, so is left to bound the error on input $\sigma = \rho^{\otimes n}$ to make sure that $\Pi_n \in \T (\Pi^{\otimes n}, \rho^{\otimes n}, \epsilon)$. We have
\begin{align*}
\| \Pi_n (\rho^{\otimes n}) - \Pi^{\otimes n} (\rho^{\otimes n}) \| 
	= \| \Tra{\neg A_{out}^{\otimes n}  B_{out}^{\otimes n}} & U_{M+1}^{\otimes n} \hat{D}_M \hat{E}_M U_M^{\otimes n} \hat{D}_{M-1} \cdots \hat{E}_1 U_1^{\otimes n} (\rho^{\otimes n} \otimes \hat{\psi}) \\
	& - \Tra{\neg A_{out}^{\otimes n} B_{out}^{\otimes n}}  U_{M+1}^{\otimes n}  U_M^{\otimes n}  \cdots  U_1^{\otimes n} (\rho^{\otimes n} \otimes \psi^{\otimes n}) \| \\
	= \| \Tra{\neg A_{out}^{\otimes n} B_{out}^{\otimes n}} & U_{M+1}^{\otimes n} D_M E_M U_M^{\otimes n} D_{M-1} \cdots E_1 U_1^{\otimes n} (\rho^{\otimes n} \otimes \psi^{\otimes n} \otimes \psi_1 \otimes \cdots \otimes \psi_M) \\
	& - \Tra{\neg A_{out}^{\otimes n} B_{out}^{\otimes n}}  U_{M+1}^{\otimes n}  U_M^{\otimes n}  \cdots  U_1^{\otimes n} (\rho^{\otimes n} \otimes \psi^{\otimes n}) \| \\
	 \leq \| \Tra{(A^\prime)^{\otimes n}  (B^\prime)^{\otimes n}} & U_{M+1}^{\otimes n} D_M \cdots E_2 U_2^{\otimes n} D_1 E_1  (\rho_1^{\otimes n} \otimes \psi_1 \otimes \cdots \otimes \psi_M) \\
	& - \Tra{ (A^\prime)^{\otimes n} (B^\prime)^{\otimes n}}  U_{M+1}^{\otimes n} D_M \cdots E_2 U_2^{\otimes n} (\rho_1^{\otimes n} \otimes \psi_2 \otimes \cdots \otimes \psi_M) \| \\
	 + \| \Tra{(A^\prime)^{\otimes n}  (B^\prime)^{\otimes n}} & U_{M+1}^{\otimes n} D_M E_M U_M^{\otimes n} D_{M-1} \cdots U_3^{\otimes n} D_2 E_2 (\rho_2^{\otimes n}  \otimes \psi_2 \otimes \cdots \otimes \psi_M) \\
	  - \Tra{(A^\prime)^{\otimes n} (B^\prime)^{\otimes n}} & U_{M+1}^{\otimes n} D_M E_M U_M^{\otimes n} D_{M-1} \cdots E_3 U_3^{\otimes n} (\rho_2^{\otimes n} \otimes \psi_3 \otimes \cdots \otimes \psi_M) \| \\
	& + \cdots \\
	 + \| \Tra{(A^\prime)^{\otimes n} (B^\prime)^{\otimes n}} & U_{M+1}^{\otimes n} D_M E_M U_M^{\otimes n} D_{M-1} E_{M-1}  (\rho_{M-1}^{\otimes n} \otimes \psi_{M-1} \otimes \psi_M) \\
	& - \Tra{(A^\prime)^{\otimes n} (B^\prime)^{\otimes n}} U_{M+1}^{\otimes n} D_M E_M U_M^{\otimes n} (\rho_{M-1}^{\otimes n} \otimes \psi_M ) \| \\
	 + \| \Tra{(A^\prime)^{\otimes n} (B^\prime)^{\otimes n}} & U_{M+1}^{\otimes n} D_M E_M  (\rho_M^{\otimes n} \otimes \psi_M ) \\
	& - \Tra{(A^\prime)^{\otimes n} (B^\prime)^{\otimes n}} U_{M+1}^{\otimes n}  (\rho_M^{\otimes n} ) \| \\
	 \leq \|  D_1 E_1  (\rho_1^{\otimes n} \otimes & \psi_1 \otimes \psi_2 \otimes \cdots \otimes \psi_M) -  (\rho_1^{\otimes n} \otimes \psi_2 \otimes \cdots \otimes \psi_M) \| \\
	 + \|  D_2 E_2 (\rho_2^{\otimes n}  \otimes & \psi_2 \otimes \psi_3 \otimes \cdots \otimes \psi_M) -  (\rho_2^{\otimes n} \otimes \psi_3 \otimes \cdots \otimes \psi_M) \| \\
	& + \cdots \\
	 + \|  D_{M-1} E_{M-1} & (\rho_{M-1}^{\otimes n}  \otimes  \psi_{M-1} \otimes \psi_M) -  (\rho_{M-1}^{\otimes n} \otimes \psi_M ) \| \\
	 + \|  D_M E_M  (\rho_M^{\otimes n} & \otimes \psi_M ) -   (\rho_M^{\otimes n} ) \| \\
	 \leq M \frac{\epsilon}{M} & \\
	 = \epsilon. &
\end{align*}
The first equality is by definition, the second one by tracing the registers $E_i^{out}, D_i^{out}$ from the unitary extensions to the encoders and decoders in the first term, the first inequality is by the triangle inequality and by definition of the $\rho_i$'s, the second inequality is due to the monotonicity of trace distance under noisy channels, and the next is by (\ref{eq:encdec1}) and (\ref{eq:encdec2}), along with the fact that appending uncorrelated systems does not change the trace distance.

We also note that on top of the entanglement $\psi$ used by each compressed copy of the protocol, the asymptotic entanglement consumption (or generation) rate is bounded by the initial quantum communication cost, which follows from $- \log \dim (C) \leq \frac{1}{2} ( I(C;A) - I(C;B) ) \leq \log \dim (C)$.

\end{proof}

\subsection{Additivity}

We now show that quantum information complexity satisfy an additivity 
property. This is used in the converse part of the proof of Theorem \ref{th:qic}. 
Let us set some notation first. We say that a triple $(\N, \rho, \epsilon)$ 
is a quantum task, corresponding to the simulation of channel $\N \in C (A_{in} \otimes B_{in}, A_{out} \otimes B_{out})$ 
on input $\rho \in \D (A_{in} \otimes B_{in})$ with error $\epsilon \in [0, 2]$.
We define a product quantum task recursively, and with the following notation:
a quantum task is a product quantum task, and if $T_1 = (\N_1, \rho_1, \epsilon_1) \otimes \cdots \otimes (\N_i, \rho_i, \epsilon_i), T_2 = (\N_{i+1}, \rho_{i+1}, \epsilon_{i+1}) \otimes \cdots \otimes (\N_{n}, \rho_{n}, \epsilon_{n})$ are two product quantum tasks, then $T_1 \otimes T_2 = \bigotimes_{i \in [n]} (\N_i, \rho_i, \epsilon_i)$ is also a product quantum task. 
We say that a protocol $\Pi_n$, with input space $A_{in}^1 \otimes B_{in}^1 \otimes \cdots \otimes A_{in}^n \otimes B_{in}^n$ and output space $A_{out}^1 \otimes B_{out}^1 \otimes \cdots \otimes A_{out}^n \otimes B_{out}^n$,
succeeds at the product quantum task 
$\bigotimes_i (\N_i, \rho_i, \epsilon_i)$
if it succeeds, for each $i$, at simulating 
channel $\N_i$ on input $\rho_i$ with error $\epsilon_i$, 
and denote by
$\T_{\otimes} ( \bigotimes_i (\N_i, \rho_i, \epsilon_i))$ 
the set of all protocols achieving this. 
Once again, if we restrict this set to $M$-message protocols, we write $\T_{\otimes}^M ( \bigotimes_i (\N_i, \rho_i, \epsilon_i))$.
We then define the quantum information 
complexity of the product quantum task $\bigotimes_i (\N_i, \rho_i, \epsilon_i)$ as 
\begin{align}
QIC_\otimes (\bigotimes_i (\N_i, \rho_i, \epsilon_i) ) = \inf_{\Pi_n \in \T_\otimes  (\bigotimes_i (\N_i, \rho_i, \epsilon_i))} QIC (\Pi_n, \rho_1 \otimes \cdots \otimes \rho_n ).
\end{align}
For the bounded round variant, we have
\begin{align}
QIC_\otimes^M (\bigotimes_i (\N_i, \rho_i, \epsilon_i) ) = \inf_{\Pi_n \in \T_\otimes^M  (\bigotimes_i (\N_i, \rho_i, \epsilon_i))} QIC (\Pi_n, \rho_1 \otimes \cdots \otimes \rho_n ).
\end{align}
We first prove the following two technical lemmata that will lead to the additivity result.

\begin{lemma}
\label{lem:add1}
	For any two protocols $\Pi^1, \Pi^2$ with $M_1, M_2$ messages, respectively,
there exists a $M$-message protocol $\Pi_2$, satisfying $\Pi_2 = \Pi^1 \otimes \Pi^2, M = \max (M_1, M_2)$, such that the following holds
for any corresponding input states $\rho^1, \rho^2$:
\begin{align*}
QIC (\Pi_2, \rho^1 \otimes \rho^2) = QIC(\Pi^1, \rho^1) + QIC(\Pi^2, \rho^2).
\end{align*}
\end{lemma}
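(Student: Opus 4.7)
The plan is to construct $\Pi_2$ explicitly as the parallel composition of $\Pi^1$ and $\Pi^2$. Assume without loss of generality that $M_1 \leq M_2$, and pad $\Pi^1$ by $(M_2 - M_1)$ trivial rounds---unitaries acting on one-dimensional communication registers---so that both protocols have exactly $M = M_2$ messages without altering the channel $\Pi^1$ implements. Since the excerpt restricts attention to protocols with an even number of messages, the speaker parity of each padded round is well-defined. Set the shared entanglement to $\psi = \psi^1 \otimes \psi^2$, identify the memory registers $A_i = A_i^1 \otimes A_i^2$, $B_i = B_i^1 \otimes B_i^2$ and communication registers $C_i = C_i^1 \otimes C_i^2$, and take the round-$i$ unitary of $\Pi_2$ to be $U_i^1 \otimes U_i^2$. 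By construction $\Pi_2$ realizes the tensor channel $\Pi^1 \otimes \Pi^2$ and uses $M = \max(M_1, M_2)$ messages.

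For the information-cost calculation, purify the input $\rho^1 \otimes \rho^2$ by $R = R_1 \otimes R_2$ where $R_j$ purifies $\rho^j$. Because each round-$i$ unitary acts independently on the two factors, an easy induction shows that at every time step the global state decomposes as a tensor product across the bipartite cut separating ``copy $1$'' registers $(A^1, B^1, C^1, R_1)$ from ``copy $2$'' registers $(A^2, B^2, C^2, R_2)$; in particular the state at round $i$ is $\rho_i^{\Pi^1} \otimes \rho_i^{\Pi^2}$ on the joint register system.

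Then I would invoke, term by term, the product-state identity recorded in the preliminaries,
\begin{align*}
I(C_i^1 C_i^2;\, R_1 R_2 \mid B_{i-1}^1 B_{i-1}^2) = I(C_i^1;\, R_1 \mid B_{i-1}^1) + I(C_i^2;\, R_2 \mid B_{i-1}^2),
\end{align*}
for odd $i$ (and symmetrically with $A_{i-1}$-conditioning for even $i$). Summing the $\tfrac{1}{2}$-weighted contributions over all $M$ rounds gives $QIC(\Pi_2, \rho^1 \otimes \rho^2) = QIC(\Pi^1, \rho^1) + QIC(\Pi^2, \rho^2)$, where the padded trivial rounds of $\Pi^1$ contribute zero since a one-dimensional $C_i^1$ has $I(C_i^1;\, R_1 \mid \cdot) = 0$.

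The only delicate point is bookkeeping: checking that the padding preserves the speaker-alternation pattern (immediate from the even-$M_i$ convention) and that the state genuinely factorizes at every step (a routine induction using that each $U_i^1 \otimes U_i^2$ leaves the bipartition invariant). The substantive content of the lemma is entirely the additivity of conditional mutual information on product states, which is already available to us.
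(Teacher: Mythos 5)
Your proposal is correct and follows essentially the same route as the paper: run the two protocols in parallel, observe that the global state factorizes across the two copies at every round (with a product purification $R = R^1 \otimes R^2$), and apply the additivity of conditional mutual information on product states term by term. The only cosmetic difference is that you pad the shorter protocol with trivial one-dimensional communication rounds so both run for $M$ messages, whereas the paper lets the longer protocol continue alone after the shorter one finishes and uses the fact that conditioning on an uncorrelated product system is useless; both devices handle the leftover rounds correctly.
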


\begin{proof}
Given protocols $\Pi^1$ and $\Pi^2$, we assume without loss of generality that $M_1 \geq M_2$, and we define the protocol $\Pi_2$ in the following way.

\hrulefill

Protocol $\Pi_2$ on input $\sigma$: \\
-Run protocols $\Pi^1, \Pi^2$ in parallel for $M_2$ messages, on corresponding input registers $A_{in}^1, B_{in}^1, A_{in}^2, B_{in}^2$ until $\Pi^2$ has finished. \\
-Finish running protocol $\Pi^1$. \\
-Take as output the output registers $A_{out}^1, B_{out}^2, A_{out}^2, B_{out}^2$ of both $\Pi^1$ and $\Pi^2$.

\hrulefill

It is clear that the channel that $\Pi_2$ implements is $\Pi_2 = \Pi^1 \otimes \Pi^2$,
and the number of messages satisfies $M = \max (M_1, M_2)$,
so is left to analyse its quantum information cost on input $\sigma = \rho_1 \otimes \rho_2$.
The first thing to notice is that we can find a purification of 
$\rho_1 \otimes \rho_2$ that is also in product form, i.e.~there 
exist a purification with the purifying system 
$R = R^1 \otimes R^2$ and such that $(\rho_1 \otimes 
\rho_2)^{A_{in}^1 B_{in}^1 A_{in}^2 B_{in}^2 R} = 
\rho_1^{A_{in}^1 B_{in}^1 R^1} \otimes \rho_2^{A_{in}^2 B_{in}^2 R^2}$. 
Also note that throughout the protocol, 
due to the structure of $\Pi_2$ and the fact that the input $\rho_1 \otimes \rho_2$ is in product form, 
any registers corresponding to 
$\Pi^1$ stays in product form with any register corresponding 
to $\Pi^2$. We label, for
 $i \in \{1, 2 \}, A_0^i = A_{in}^i \otimes T_A^i, B_0^i = B_{in}^i \otimes
T_B^i, A_{M_i}^i = A_{out}^i \otimes (A^{\prime})^i, B_{M_i}^i = B_{out}^i \otimes (B^{\prime})^i$. Then
\begin{align*}
		 QIC (\Pi_2, \rho_1 \otimes \rho_2)  
		= I(C_1^1 C_1^2 ; R^1 R^2 | B_0^1 B_0^2) &+ I(C_2^1 C_2^2 ; R^1 R^2 | A_1^1 A_1^2) \\
		&+ \cdots +I(C_{M_2}^1 C_{M_2}^2 ; R^1 R^2 | A_{M_2-1}^1 A_{M_2-1}^2) \\
		+ I(C_{M_2 + 1}^1  ; R^1 R^2 | & B_{M_2}^1  B_{M_2}^2) + \cdots + I(C_{M_1}^1  ; R^1 R^2 | A_{M_1-1}^1 A_{M_2}^2) \\
		= I(C_1^1 ; R^1 | B_0^1 ) + I ( & C_2^1  ; R^1 | A_1^1 ) + \cdots +I(C_{M_1}^1 ; R^1 | A_{M_1-1}^1 ) \\
		+ I(C_1^2 ; R^2 |  B_0^2) + & I(C_2^2 ; R^2 |  A_1^2) + \cdots \\		
		= QIC (\Pi^1, \rho_1) +  Q&IC  (\Pi^2, \rho_2).
\end{align*}
The first equality is by definition of quantum information cost of $\Pi_2$, and due to its parallel
 structure, the second equality is because registers of $\Pi^1, \Pi^2$ are in product form, 
and then the last equality follows from definition and the structure of $\Pi_2$. 
\end{proof}

\begin{lemma}
\label{lem:add2}
	For any $M$-message protocol $\Pi_2$ and any input states $\rho^1 \in \D (A_{in}^1 \otimes B_{in}^1), \rho_2 \in \D ( A_{in}^2 \otimes B_{in}^2)$,
there exist $M$-message protocols $\Pi^1, \Pi^2$ satisfying $\Pi^1 (\cdot) = \Tra{A_{out}^2 B_{out}^2} \circ \Pi_2 (\cdot \otimes \rho^2), \Pi^2 (\cdot) = \Tra{A_{out}^1 B_{out}^1} \circ \Pi_2 (\rho^1 \otimes \cdot)$, and the following holds:
\begin{align*}
QIC(\Pi^1, \rho^1) + QIC(\Pi^2, \rho^2) = QIC (\Pi_2, \rho^1 \otimes \rho^2).
\end{align*}
\end{lemma}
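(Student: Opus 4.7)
The plan is to construct $\Pi^1$ (and symmetrically $\Pi^2$) by ``hard-coding'' $\rho^2$ into the initial entangled state of $\Pi_2$. Specifically, I would fix a purification $\ket{\rho^2}^{A_{in}^2 B_{in}^2 R^2}$ of $\rho^2$ and augment the shared entangled state $\psi$ of $\Pi_2$ with $\ket{\rho^2}$, assigning the registers $A_{in}^2 R^2$ to Alice's side (as part of an enlarged $T_A'$) and $B_{in}^2$ to Bob's side (enlarged $T_B'$). The new protocol $\Pi^1$ uses exactly the same unitaries, communication registers, and number of messages as $\Pi_2$, reads $\rho^1$ as its external input, and discards $A_{out}^2 B_{out}^2 R^2$ at the end. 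By construction this yields $\Pi^1(\cdot) = \Tra{A_{out}^2 B_{out}^2} \circ \Pi_2(\cdot \otimes \rho^2)$, and $\Pi^2$ is defined in parallel fashion except the purifying register $R^1$ of $\rho^1$ is placed on \emph{Bob}'s side instead of Alice's. Both remain $M$-message protocols.

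For the QIC identity, the key observation is that $\ket{\rho^1} \otimes \ket{\rho^2}$ purifies $\rho^1 \otimes \rho^2$ with combined reference $R = R^1 R^2$, and throughout the run of $\Pi_2$ on this product input the registers $R^1, R^2$ are never touched by any unitary. The plan is to write $QIC(\Pi_2, \rho^1 \otimes \rho^2)$ as a sum of terms of the form $\tfrac{1}{2} I(C_i; R^1 R^2 \mid \cdot)$ conditioned on the appropriate $\Pi_2$-memory register $A_{i-1}$ or $B_{i-1}$, and to split each using the chain rule $I(X; YZ \mid W) = I(X; Y \mid W) + I(X; Z \mid WY)$. With the placement ``$R^2$ with Alice in $\Pi^1$, $R^1$ with Bob in $\Pi^2$'', Alice's memory register in $\Pi^1$ at step $i$ is $A_i^{\Pi_2} R^2$ and Bob's is $B_i^{\Pi_2}$, and symmetrically for $\Pi^2$; the chain rule then produces, for each $i$, one summand that is exactly a QIC term of $\Pi^1$ and one that is exactly a QIC term of $\Pi^2$, since the ``extra'' register appearing inside the larger conditioning in each split is precisely the purifying reference that the relevant party is holding.

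The main obstacle I expect is bookkeeping rather than anything substantive. One has to verify carefully that Alice's and Bob's memory registers inside $\Pi^j$ are, at each step, the memory registers of $\Pi_2$ augmented by exactly one of $R^1, R^2$ on the correct side, and one must pick the chain rule split in the right direction: for odd $i$ one writes $I(C_i; R^1 R^2 \mid B_{i-1}) = I(C_i; R^1 \mid B_{i-1}) + I(C_i; R^2 \mid B_{i-1} R^1)$ so that the first term matches $QIC(\Pi^1, \rho^1)$ (Bob holds only $B_{i-1}$) and the second matches $QIC(\Pi^2, \rho^2)$ (Bob holds $B_{i-1} R^1$); for even $i$ one writes the analogous split with $R^2$ extracted first, so that $QIC(\Pi^2, \rho^2)$ supplies the $I(C_i; R^2 \mid A_{i-1})$ term and $QIC(\Pi^1, \rho^1)$ supplies the $I(C_i; R^1 \mid A_{i-1} R^2)$ term. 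Once this matching is set up, the identity $QIC(\Pi^1, \rho^1) + QIC(\Pi^2, \rho^2) = QIC(\Pi_2, \rho^1 \otimes \rho^2)$ follows by summing these chain rule decompositions term by term, with no further computation required.
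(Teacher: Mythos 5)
Your proposal is correct and is essentially identical to the paper's own proof: the same construction (hard-coding the other input's purification into the pre-shared entanglement, with $R^2$ given to Alice in $\Pi^1$ and $R^1$ given to Bob in $\Pi^2$) and the same pairwise chain-rule matching of conditional mutual information terms. The asymmetric placement of the purifying registers, which you correctly identify as the key bookkeeping choice, is exactly what the paper does.
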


\begin{proof}
Given $\Pi_2$, we define the 
protocols $\Pi^1, \Pi^2$ in the following way.

\hrulefill

Protocol $\Pi^1$ on input $\sigma^1$: \\
-Let $(\rho^2)^{A_{in}^2 B_{in}^2 R^2}$ be a purification of $\rho^2$, 
and $\psi^{T_A T_B}$ be the entangled state used in the $\Pi_2$ protocol. 
The entangled state for the protocol will be $\rho^2 \otimes \psi$, 
with the $A_{in}^2, R^2, T_A $ registers given to Alice, 
and the $B_{in}^2, T_B$ registers given to Bob. \\
-Using the $\rho^2$ state given as pre-shared entanglement to simulate 
the other input, run protocol $\Pi_2$ on input $\sigma_1 \otimes \rho_2$. \\
-Take as output the $A_{out}^1 B_{out}^1$ output registers.

\hrulefill

Protocol $\Pi^2$ on input $\sigma^2$: \\
-Let $(\rho^1)^{A_{in}^1 B_{in}^1 R^1}$ be a purification of $\rho^1$, 
and $\psi^{T_A T_B}$ be the entangled state used in the $\Pi_2$ protocol. 
The entangled state for the protocol will be $\rho^1 \otimes \psi$, 
with the $A_{in}^1, T_A $ registers given to Alice, 
and the $B_{in}^2, R^1, T_B$ registers given to Bob. \\
-Using the $\rho^1$ state given as pre-shared entanglement 
to simulate the other input, run protocol $\Pi_2$ on input $\rho^1 \otimes \sigma^2$. \\
-Take as output the $A_{out}^2 B_{out}^2$ output registers.

\hrulefill

It is clear that for any $\sigma^1 \in \D (A_{in}^1 \otimes B_{in}^1), \sigma^2 \in \D (A_{in}^2 \otimes B_{in}^2)$, $\Pi^1, \Pi^2$ implement $\Pi^1 (\sigma^1) = \Tra{A_{out}^2 B_{out}^2} 
\circ \Pi_2 (\sigma^1 \otimes \rho^2), \Pi^2 (\sigma^2)  = \Tra{A_{out}^1 B_{out}^1} \circ \Pi_2 (\rho^1 \otimes \sigma^2)$, respectively.
Also, $\Pi^1, \Pi^2$ are $M$-message protocols, 
 so is left to analyse their quantum information costs on input $\rho^1, \rho^2$, respectively. 
We reuse the notation from the previous lemma. By definition and the structure of the protocols, 
\begin{align*}
 QIC (\Pi^1, \rho^1) 
		&= I(C_1 ; R^1 | B_0 ) + I(C_2 ; R^1 | A_1 R^2 ) + \cdots, \\
QIC (\Pi^2, \rho^2) 
		&= I(C_1 ; R^2 | B_0 R^1 ) + I(C_2 ; R^2 | A_1 ) + \cdots,
\end{align*}
and we get by rearranging terms
\begin{align*}
	QIC (\Pi^1, \rho^1) + QIC (\Pi^2, \rho^2) 
		& = I(C_1 ; R^1 | B_0 ) + I(C_1 ; R^2 | B_0 R^1 ) \\
		&+ I(C_2 ; R^1 | A_1 R^2 ) + I(C_2 ; R^2 | A_1 ) + \cdots \\
		&= I(C_1 ; R^1 R^2 | B_0) + I(C_2 ; R^1 R^2 | A_1 ) + \cdots \\
		&= QIC (\Pi_2, \rho_1 \otimes \rho_2)
\end{align*}
where we have used the structure of the protocols along with the chain rule for mutual information on pairs of terms for 
the second equality, and the last equality follow from definition and the structure of the protocols. 

\end{proof}

We then get as a corollary the following additivity result:

\begin{corollary}
\label{cor:add3}
	For any two product quantum tasks $T_1, T_2$ and any bound $M \in \mathbb{N}$ on the number of messages,
\begin{align*}
	QIC_\otimes (T_1 \otimes T_2) & = 
QIC_\otimes (T_1) + QIC_\otimes (T_2), \\
	QIC_\otimes^M (T_1 \otimes T_2) & = 
QIC_\otimes^M (T_1) + QIC_\otimes^M (T_2).
\end{align*}
\end{corollary}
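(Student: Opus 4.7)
The plan is to obtain both equalities by a straightforward double-inequality argument: the ``$\leq$'' direction comes from combining near-optimal protocols for $T_1$ and $T_2$ via Lemma \ref{lem:add1}, and the ``$\geq$'' direction comes from splitting a near-optimal protocol for $T_1 \otimes T_2$ via Lemma \ref{lem:add2}. Since $QIC_\otimes$ and $QIC_\otimes^M$ are defined as infima, one standard $\delta > 0$ slack is introduced and taken to zero at the end.

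For $QIC_\otimes^M (T_1 \otimes T_2) \leq QIC_\otimes^M (T_1) + QIC_\otimes^M (T_2)$, fix $\delta > 0$ and choose $M$-message protocols $\Pi^1 \in \T_\otimes^M (T_1)$, $\Pi^2 \in \T_\otimes^M (T_2)$ whose quantum information costs on the product inputs are within $\delta/2$ of the respective infima. Apply Lemma \ref{lem:add1} with these two protocols to obtain an $M$-message protocol $\Pi_2 = \Pi^1 \otimes \Pi^2$. Because $\Pi^1$ succeeds at each channel simulation in $T_1$ and $\Pi^2$ at each in $T_2$, and because the two sub-protocols act on disjoint registers with disjoint reference systems, $\Pi_2$ satisfies the per-copy error requirements of the product task $T_1 \otimes T_2$, hence lies in $\T_\otimes^M (T_1 \otimes T_2)$. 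The equality from Lemma \ref{lem:add1} then yields $QIC(\Pi_2, \rho^1 \otimes \rho^2) = QIC(\Pi^1, \rho^1) + QIC(\Pi^2, \rho^2)$, so taking the infimum on the left and letting $\delta \to 0$ gives the bound.

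For the reverse inequality, fix $\delta > 0$ and choose $\Pi_2 \in \T_\otimes^M (T_1 \otimes T_2)$ with $QIC(\Pi_2, \rho^1 \otimes \rho^2) \leq QIC_\otimes^M (T_1 \otimes T_2) + \delta$. Apply Lemma \ref{lem:add2} to extract $M$-message protocols $\Pi^1, \Pi^2$ defined by fixing and marginalizing the other task's input and output. Since $\Pi_2$ already achieves the per-component error bounds in the joint run, and since the reference systems factor as $R = R^1 \otimes R^2$, marginalizing $\Pi_2$ over the other component's output preserves trace distance to each target channel; hence $\Pi^1 \in \T_\otimes^M (T_1)$ and $\Pi^2 \in \T_\otimes^M (T_2)$. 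The equality from Lemma \ref{lem:add2} then gives $QIC_\otimes^M (T_1) + QIC_\otimes^M (T_2) \leq QIC(\Pi^1, \rho^1) + QIC(\Pi^2, \rho^2) = QIC(\Pi_2, \rho^1 \otimes \rho^2) \leq QIC_\otimes^M (T_1 \otimes T_2) + \delta$, and letting $\delta \to 0$ finishes this direction.

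The unbounded-round statement follows from the identical argument with the $M$ restriction dropped, using that Lemma \ref{lem:add1} produces a protocol with $\max(M_1, M_2)$ messages and Lemma \ref{lem:add2} preserves the round count. The only subtle point is the verification of protocol membership in $\T_\otimes$ and $\T_\otimes^M$ after composition or marginalization, which is where the product structure of the inputs (and their purifications) is essential; this is exactly what is exploited in the proofs of the two underlying lemmata, so no significant obstacle remains beyond the bookkeeping above.
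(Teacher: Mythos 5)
Your proposal is correct and follows essentially the same route as the paper: the ``$\leq$'' direction via Lemma \ref{lem:add1} applied to near-optimal protocols for $T_1$ and $T_2$, the ``$\geq$'' direction via Lemma \ref{lem:add2} applied to a near-optimal protocol for the product task, with the standard $\delta$-slack for the infima and round-counting for the bounded-round case. Your extra care in verifying membership in $\T_\otimes$ and $\T_\otimes^M$ after composition and marginalization is a point the paper dispatches with ``clearly succeeds,'' but the substance is identical.
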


\begin{proof}
We consider the two product tasks $T_1 = \bigotimes_i (\N_i, \rho_i, \epsilon_i), T_2 = \bigotimes_j (\M_j, \sigma_j, \delta_j)$.
We first prove the $\leq$ direction.
Let $\Pi^1$ and $\Pi^2$ be protocols succeeding at the corresponding tasks $T_1, T_2$,
and achieving, for an arbitrary 
small $\epsilon^\prime > 0$, $QIC (\Pi^1, \rho_1 \otimes \cdots \otimes \rho_n) 
\leq QIC_\otimes (T_1) + \epsilon^\prime, QIC (\Pi^2, \sigma_1 \otimes \cdots \otimes \sigma_m) 
\leq QIC_\otimes (T_2) + \epsilon^\prime$, respectively. Taking the corresponding protocol $\Pi_2$ from Lemma \ref{lem:add1},
it clearly succeeds at the product task $T_1 \otimes T_2$,
and we get 
\begin{align*}
		QIC_\otimes (T_1 \otimes T_2)
		&\leq QIC (\Pi_2, \rho_1 \otimes \cdots \otimes \rho_n \otimes \sigma_1 \otimes \cdots \otimes \sigma_m)  \\
		&= QIC (\Pi^1, \rho_1 \otimes \cdots \otimes \rho_n) + QIC (\Pi^2, \sigma_1 \otimes \cdots \otimes \sigma_m) \\
		&\leq QIC_\otimes (T_1) + QIC_\otimes (T_2) + 2 \epsilon^\prime.
\end{align*}

Now for the $\geq$ direction, let $\Pi_2$ be a protocol succeeding at the product 
task and achieving $QIC (\Pi_2, \rho_1 \otimes \cdots \otimes \rho_n \otimes \sigma_1 \otimes \cdots \otimes \sigma_m) \leq 
QIC_\otimes (T_1 \otimes T_2) 
+ \epsilon^\prime$ for an arbitrary small $\epsilon^\prime > 0$. Taking the corresponding protocols $\Pi^1, \Pi^2$
from Lemma \ref{lem:add2} for tasks $T_1, T_2$, they clearly succeed at their respective task, and we get
\begin{align*}
	QIC_\otimes (T_1) &+ QIC_\otimes (T_2) \\
		& \leq  QIC (\Pi^1, \rho_1 \otimes \cdots \otimes \rho_n) + QIC (\Pi^2, \sigma_1 \otimes \cdots \otimes \sigma_m) \\
		&= QIC (\Pi_2, \rho_1 \otimes \cdots \otimes \rho_n \otimes \sigma_1 \otimes \cdots \otimes \sigma_m) \\
		&\leq  QIC_\otimes (T_1 \otimes T_2) + \epsilon^\prime.
\end{align*}
Keeping tracks of rounds, we also get the bounded round result.
\end{proof}

If we now consider $n$-fold product quantum task $(\N, \rho, \epsilon)^{\otimes n}$ and compare to the notation introduced when discussing amortized communication, we have $\T_n (\N^{\otimes n}, \rho^{\otimes n}, \epsilon) = \T_\otimes ((\N, \rho, \epsilon)^{\otimes n})$. Correspondingly, we define
\begin{align}
QIC_n (\N^{\otimes n}, \rho_n^{\otimes n}, \epsilon) = QIC_\otimes ((\N, \rho, \epsilon)^{\otimes n}),
\end{align}
and the bounded round variant
\begin{align}
QIC_n^M (\N^{\otimes n}, \rho_n^{\otimes n}, \epsilon) = QIC_\otimes^M ((\N, \rho, \epsilon)^{\otimes n}).
\end{align}

Then we get by induction on the previous corollary the additivity result used in the proof of Theorem \ref{th:qic}.

\begin{corollary}
\label{cor:add}
	For any quantum task $(\N, \rho, \epsilon)$ and bound $M \in \mathbb{N}$ on the number of messages,
\begin{align*}
	n QIC (\N, \rho, \epsilon) & = QIC_n (\N^{\otimes n}, \rho^{\otimes n}, \epsilon), \\
	n QIC^M (\N, \rho, \epsilon) & = QIC_n^M (\N^{\otimes n}, \rho^{\otimes n}, \epsilon).
\end{align*}
\end{corollary}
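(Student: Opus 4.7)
The plan is to do a straightforward induction on $n$, using Corollary \ref{cor:add3} as the key step. The definitions set things up so that $QIC_n(\N^{\otimes n},\rho^{\otimes n},\epsilon) = QIC_\otimes((\N,\rho,\epsilon)^{\otimes n})$, which is exactly the quantity that behaves multiplicatively under the product-task tensor operation $\otimes$.

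For the base case $n=1$, there is nothing to prove: by definition, $QIC_1(\N,\rho,\epsilon) = QIC_\otimes((\N,\rho,\epsilon)) = QIC(\N,\rho,\epsilon)$, and identically for the $M$-message variant. For the inductive step, assume the identity $kQIC(\N,\rho,\epsilon) = QIC_k(\N^{\otimes k},\rho^{\otimes k},\epsilon)$. Observing that the $(k+1)$-fold product task decomposes as $(\N,\rho,\epsilon)^{\otimes (k+1)} = (\N,\rho,\epsilon)^{\otimes k} \otimes (\N,\rho,\epsilon)$, applying Corollary \ref{cor:add3} with $T_1 = (\N,\rho,\epsilon)^{\otimes k}$ and $T_2 = (\N,\rho,\epsilon)$ gives
\begin{align*}
QIC_{k+1}(\N^{\otimes (k+1)},\rho^{\otimes (k+1)},\epsilon) &= QIC_\otimes(T_1 \otimes T_2) \\
&= QIC_\otimes(T_1) + QIC_\otimes(T_2) \\
&= QIC_k(\N^{\otimes k},\rho^{\otimes k},\epsilon) + QIC(\N,\rho,\epsilon) \\
&= k QIC(\N,\rho,\epsilon) + QIC(\N,\rho,\epsilon) = (k+1) QIC(\N,\rho,\epsilon),
\end{align*}
using the induction hypothesis in the penultimate line.

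The identical argument with the $M$-message versions $QIC^M_\otimes$ and the bounded-round part of Corollary \ref{cor:add3} handles the second equality; the crucial point is that Lemmata \ref{lem:add1} and \ref{lem:add2} preserve the round count, so the induction goes through verbatim in the bounded-round setting. There is no real obstacle here: all the work was done in establishing Corollary \ref{cor:add3} (and before that in Lemmata \ref{lem:add1} and \ref{lem:add2}, where the parallel composition of protocols is shown to preserve information cost additively on product inputs, exploiting that a purification of a tensor-product state can be chosen in product form across the reference register). The induction simply iterates this two-factor additivity statement $n-1$ times.
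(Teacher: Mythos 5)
Your proof is correct and follows exactly the route the paper intends: the paper's own justification of Corollary \ref{cor:add} is precisely ``by induction on the previous corollary,'' i.e.~iterating the two-factor additivity of Corollary \ref{cor:add3}, which you have simply written out in full. Nothing is missing, and the bounded-round case is handled the same way in both arguments.
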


By definition and by using Lemma \ref{lem:qicvsqcc}, we also get the following.

\begin{corollary}
\label{cor:qicnqccn}
	For any quantum task $(\N, \rho, \epsilon)$ and bound $M \in \mathbb{N}$ on the number of messages,
\begin{align*}
	QIC_n (\N^{\otimes n}, \rho^{\otimes n}, \epsilon) 
& \leq QCC_n (\N^{\otimes n}, \rho^{\otimes n}, \epsilon), \\
	QIC_n^M (\N^{\otimes n}, \rho^{\otimes n}, \epsilon) 
& \leq QCC_n^M (\N^{\otimes n}, \rho^{\otimes n}, \epsilon).
\end{align*}
\end{corollary}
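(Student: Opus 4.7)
The plan is to chain two short facts that we already have in hand: the single-protocol bound from Lemma \ref{lem:qicvsqcc}, and the identification between the $n$-fold task set and the corresponding product task set that was made in the paragraph preceding the corollary.

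First I would unfold the definition of the right hand side. By definition, $QCC_n (\N^{\otimes n}, \rho^{\otimes n}, \epsilon)$ is an infimum of $QCC(\Pi_n)$ over $\Pi_n \in \T_n(\N^{\otimes n}, \rho^{\otimes n}, \epsilon)$, so it suffices to show that for every such $\Pi_n$ we have $QIC_n (\N^{\otimes n}, \rho^{\otimes n}, \epsilon) \leq QCC(\Pi_n)$. Fix any such $\Pi_n$. Lemma \ref{lem:qicvsqcc}, applied with input state $\rho^{\otimes n}$, immediately gives
\begin{align*}
QIC(\Pi_n, \rho^{\otimes n}) \leq QCC(\Pi_n).
\end{align*}

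Next I would use the identification $\T_n (\N^{\otimes n}, \rho^{\otimes n}, \epsilon) = \T_\otimes ((\N, \rho, \epsilon)^{\otimes n})$, which was recorded just before the corollary, together with the definition $QIC_n (\N^{\otimes n}, \rho^{\otimes n}, \epsilon) = QIC_\otimes ((\N, \rho, \epsilon)^{\otimes n})$. This says that $QIC_n$ is an infimum of $QIC(\Pi_n, \rho^{\otimes n})$ over the same set $\T_n$, hence $QIC_n (\N^{\otimes n}, \rho^{\otimes n}, \epsilon) \leq QIC(\Pi_n, \rho^{\otimes n})$. Combining this with the displayed inequality and taking the infimum over $\Pi_n$ yields the first claim. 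For the bounded round version, the same argument goes through verbatim with $\T_n^M$ and $\T_\otimes^M$ replacing their unrestricted counterparts, since the number of messages is preserved throughout.

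There is essentially no obstacle here: the statement is a direct corollary, as the text already indicates, and the only thing to be careful about is making sure to apply Lemma \ref{lem:qicvsqcc} to the joint protocol $\Pi_n$ viewed as a protocol on input $\rho^{\otimes n}$, rather than trying to use single-copy additivity on each factor separately (which would require Corollary \ref{cor:add} and is unnecessary for this inequality).
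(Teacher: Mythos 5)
Your proof is correct and follows exactly the paper's intended argument: both $QIC_n$ and $QCC_n$ are optimizations over the same set $\T_n (\N^{\otimes n}, \rho^{\otimes n}, \epsilon) = \T_\otimes ((\N, \rho, \epsilon)^{\otimes n})$, and Lemma \ref{lem:qicvsqcc} applied to each $\Pi_n$ on input $\rho^{\otimes n}$ gives the pointwise inequality, after which one takes the infimum. Your closing remark is also well taken: additivity (Corollary \ref{cor:add}) is not needed here and is only invoked later, in the converse part of Theorem \ref{th:qic}.
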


\subsection{Convexity, Concavity and Continuity}

We show that quantum information complexity is jointly convex in the channel and the error parameter.
We also state the corollary that it is continuous in the error parameter, a fact used in the direct coding part of 
Theorem \ref{th:qic}. 
Finally, we prove that the quantum information cost is concave in the input state.

We start with convexity.

\begin{lemma}
\label{lem:conv}
For any $p \in [0, 1]$, any two protocols $\Pi^1, \Pi^2$ with $M_1, M_2$ messages, respectively, there exists a $M$-message protocol $\Pi$ satisfying
$\Pi = p\Pi^1 +(1-p)\Pi^2, M = \max (M_1, M_2)$, such that the following holds for and any state $\rho$:
\begin{align*}
 QIC(\Pi, \rho) = p QIC(\Pi^1, \rho) + (1-p) QIC(\Pi^2, \rho).
\end{align*}
\end{lemma}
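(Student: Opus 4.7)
The plan is to build $\Pi$ as a shared-coin mixture of $\Pi^1$ and $\Pi^2$, controlled by a classical flag that is available to both parties from the outset as part of the pre-shared entanglement. Specifically, augment the entangled state $\psi$ used by $\Pi$ by a classical shared bit $\omega = \sum_w \sqrt{p_w}\ket{w}^{W_A}\ket{w}^{W_B}$ where $p_1=p$, $p_2=1-p$, and absorb $\psi^1 \otimes \psi^2 \otimes \omega$ into the overall entangled state, with $W_A$ joined to $T_A$ and $W_B$ joined to $T_B$. Take each unitary of $\Pi$ to be the controlled unitary $U_i = \kb{1}{1}^W \otimes U_i^1 + \kb{2}{2}^W \otimes U_i^2$ on the appropriate registers (with the untouched branch acting as identity on its side's private workspace). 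For $\min(M_1,M_2) < i \leq \max(M_1,M_2)$, the branch whose protocol has already terminated sends a trivial (one-dimensional) message, so the total round count is $M = \max(M_1,M_2)$.

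For the channel check, tracing out the flag at the end is equivalent to averaging, since $\omega$ is diagonal in the computational basis and the branches act independently on disjoint workspaces; hence $\Pi(\rho)=p\,\Pi^1(\rho)+(1-p)\,\Pi^2(\rho)$ on any input $\rho$.

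For the information cost, the key observation is that at every round $i$, both $A_{i-1}$ and $B_{i-1}$ contain a copy of the classical flag (Alice keeps $W_A$, Bob keeps $W_B$, and these are never touched by the controlled unitaries). Using the classical averaging identity $I(C_i;R\mid B_{i-1}) = \sum_w p_w\, I(C_i;R\mid B'_{i-1})_{\rho_w}$, where $B'_{i-1}$ denotes Bob's memory excluding $W_B$ and $\rho_w$ is the branch state conditioned on $W=w$, and noting that conditioned on $W=w$ the ``other'' protocol's registers are in a fixed product state with everything else (hence do not contribute to the mutual information by the product-state identities recalled in Section~3.1), we obtain
\begin{align*}
 I(C_i;R\mid B_{i-1})_\rho = \sum_w p_w\, I(C_i^w;R\mid B_{i-1}^w)_{\rho_w},
\end{align*}
and analogously for even $i$ with $A_{i-1}$ in place of $B_{i-1}$. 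Summing the halved quantities over $i$ (with the dummy-message rounds contributing zero because the corresponding $C_i^w$ are trivial) and exchanging the order of summation yields $QIC(\Pi,\rho) = p\,QIC(\Pi^1,\rho) + (1-p)\,QIC(\Pi^2,\rho)$.

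The only real subtlety is verifying that conditioning on the flag genuinely reduces each branch to an unperturbed instance of $\Pi^w$ on the same input $\rho$; this follows because the augmented entangled state factorizes across branches and the controlled unitaries leave the idle branch's workspace invariant, so for each $w$ the restriction of the joint state to Alice-plus-Bob-plus-$R$ at any round is exactly the corresponding state in $\Pi^w$ tensored with the idle branch's initial pure ancilla, which drops out of all the conditional mutual informations by the tensor-product rules.
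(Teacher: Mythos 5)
Your proposal is correct and follows essentially the same route as the paper's proof: a shared entangled selector register that becomes classical once the other party's registers are traced out, combined with the conditioning-on-a-classical-system-equals-averaging identity and the product-state rules to drop the idle branch's registers. The only (cosmetic) difference is that the paper always runs both protocols in parallel, feeding a padding pure state to the unselected one, so that the communication register $C_i = C_i^1 \otimes C_i^2$ has a fixed structure across branches, whereas your controlled unitary with an idle identity branch maps the input register to differently factored codomains in the two branches and so needs a small amount of register bookkeeping (e.g.\ exactly that padding) to be a protocol in the paper's formal sense; the information-cost computation is unaffected.
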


\begin{proof}
Given $\Pi^1, \Pi^2$, we assume without loss of generality that $M_1 \geq M_2$, and we define $\Pi$ in the following way:

\hrulefill

Protocol $\Pi$ on input $\rho$: \\
-The entangled state $\psi$ contains many parts: it contains both entangled states $\psi_1, \psi_2$ 
for the corresponding protocols $\Pi^1, \Pi^2$, it contains selector registers in 
state $\ket{\sigma_p} = \sqrt{p}\ket{1}^{S_A} \ket{1}^{S_B} + 
\sqrt{1-p} \ket{2}^{S_A} \ket{2}^{S_B}$, and it contains padding 
pure states to feed as input to the protocol that is not selected, held in registers $D_A, D_B$. \\
-Coherently control what to input into the two protocols: on control 
set to $1$, input state $\rho$ into protocol $\Pi^1$ and the padding 
pure state into protocol $\Pi^2$ , and vice-versa on control set to $2$. \\
-Run protocols $\Pi^1, \Pi^2$ in parallel for $M_2$ messages on given input until $\Pi^2$ has finished. \\
-Finish running protocol $\Pi^1$. \\
-Coherently control what to output: on control set to $1$, take as output 
the $A_{out}, B_{out}$ registers of protocol $\Pi^1$, and on control set to $2$, take those of protocol $\Pi^2$.

\hrulefill

Note that by the structure of the above protocol and because the selector registers are traced out at the end,
the output is of the form $\Pi (\rho) = p \Pi^1 (\rho) + (1-p) \Pi^2 (\rho)$, and $\Pi$ is a $M$-message protocol.
We must now verify that the quantum information cost satisfies the stated property.
First note that if Alice's registers are traced out, then Bob's selector register is effectively a classical register,
and similarly for Alice's selector register if Bob's registers are traced out.
Also note that throughout the protocol,
due to the structure of $\Pi$, conditional on some classical state of the selector register,
the reference register $R$ can only be correlated with
registers in the corresponding protocol, and
$D_B$ either contains a padding pure state or the input, with the later only when a padding pure state
was input into $\Pi$. Also, still conditional on some classical state of the selector register, any register
corresponding to $\Pi^1$ is in product form with any register corresponding to $\Pi^2$. Then 
\begin{align*}
		QIC (\Pi, \rho) 
		&= I(C_1^1 C_1^2 ; R | B_{in} D_B T_B^1 T_B^2 S_B) + I(C_2^1 C_2^2 ; R | A_1^1 A_1^2 S_A) \\
		& + \cdots +I(C_{M_2}^1 C_{M_2}^2 ; R | A_{M_2-1}^1 A_{M_2-1}^2 S_A) \\
		&+ I(C_{M_2 + 1}^1 ; R | B_{M_2}^1 B_{out}^2 (B^\prime)^2 S_B) + \cdots + I(C_{M_1}^1 ; R | A_{M_1-1}^1 A_{out}^2 (A^\prime)^2  S_A) \\
		&= p ( I(C_1^1 C_1^2 ; R | B_{in} D_B T_B^1 T_B^2 (S_B = 1)) + I(C_2^1 C_2^2 ; R | A_1^1 A_1^2 (S_A = 1)) \\
		& + \cdots +I(C_{M_2}^1 C_{M_2}^2 ; R | A_{M_2-1}^1 A_{M_2-1}^2 (S_A = 1)) \\
		&+ I(C_{M_2 + 1}^1 ; R | B_{M_2}^1 B_{out}^2 (B^\prime)^2 (S_B = 1)) + \cdots + I(C_{M_1}^1 ; R | A_{M_1-1}^1 A_{out}^2 (A^\prime)^2  (S_A = 1)) ) \\
		&+ (1-p) ( I(C_1^1 C_1^2 ; R | B_{in} D_B T_B^1 T_B^2 (S_B = 2)) + I(C_2^1 C_2^2 ; R | A_1^1 A_1^2 (S_A = 2)) \\
		& + \cdots +I(C_{M_2}^1 C_{M_2}^2 ; R | A_{M_2-1}^1 A_{M_2-1}^2 (S_A = 2)) \\
		&+ I(C_{M_2 + 1}^1 ; R | B_{M_2}^1 B_{out}^2 (B^\prime)^2 (S_B = 2)) + \cdots + I(C_{M_1}^1 ; R | A_{M_1-1}^1 A_{out}^2 (A^\prime)^2  (S_A = 2)) ) \\
		&= p ( I(C_1^1 ; R | B_{in}^1 T_B^1 (S_B = 1) ) + I(C_2^1 ; R | A_1^1 (S_A = 1)) + 
			\cdots +I(C_{M_1}^1 ; R | A_{M_1-1}^1 (S_A = 1))) \\
		&+ (1-p) (I(C_1^2 ; R | B_{in}^2 T_B^2 (S_B = 2)) + I(C_2^2 ; R |  A_1^2 (S_A = 2)) + \cdots ) \\
		&= p QIC (\Pi^1, \rho) +  (1-p) QIC (\Pi^2, \rho).
\end{align*}
The first equality is by definition of
 quantum information cost of $\Pi$, and due to its parallel
 structure, the second equality uses the above remark about the selector register of one party being
classical when the registers of the other party are traced out, along with a convex rewriting of conditional mutual information,
the third equality uses the above remark about the product structure of $R$ and the registers corresponding to $\Pi^1, \Pi^2$, respectively, depending on 
the classical state of the selector register, and the last equality is due to the fact that conditional on some classical state of the selector register,
the state in the registers considered is the same as the one in the corresponding protocol.
\end{proof}

We get the convexity in the channel and error parameters as a corollary.

\begin{corollary}
\label{lem:conv}
For any $p \in [0, 1]$, define $\N = p \N_1 + (1-p) \N_2$ for any two channels $\N_1, \N_2, \epsilon = p\epsilon_1 + (1-p) \epsilon_2$ for any two error parameters $\epsilon_1, \epsilon_2 \in [0, 2]$, for any bound $M = \max (M_1, M_2 ), M_1, M_2 \in \mathbb{N}$ on the number of messages and for any input state $\rho$, the following holds:
\begin{align*}
 QIC(\N, \rho, \epsilon) & \leq p QIC(\N_1, \rho, \epsilon_1) + (1-p) QIC(\N_2, \rho, \epsilon_2), \\
 QIC^M (\N, \rho, \epsilon) & \leq p QIC^{M_1} (\N_1, \rho, \epsilon_1) + (1-p) QIC^{M_2} (\N_2, \rho, \epsilon_2).
\end{align*}
\end{corollary}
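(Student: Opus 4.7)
The plan is to reduce this corollary directly to the protocol-level convexity statement of the preceding lemma (\emph{lem:conv}), combined with a triangle-inequality argument for the trace distance to control the error parameter.

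First, I would fix an arbitrary $\epsilon' > 0$ and choose protocols $\Pi^1 \in \T^{M_1}(\N_1, \rho, \epsilon_1)$ and $\Pi^2 \in \T^{M_2}(\N_2, \rho, \epsilon_2)$ that nearly achieve the respective infima, so that $QIC(\Pi^i, \rho) \leq QIC^{M_i}(\N_i, \rho, \epsilon_i) + \epsilon'$ for $i = 1, 2$. Applying the previous lemma to $\Pi^1, \Pi^2$ with the given $p$ produces an $M$-message protocol $\Pi$ (with $M = \max(M_1, M_2)$) that implements the channel $\Pi = p \Pi^1 + (1-p) \Pi^2$ and satisfies $QIC(\Pi, \rho) = p \, QIC(\Pi^1, \rho) + (1-p) \, QIC(\Pi^2, \rho)$.

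Second, I would verify that $\Pi \in \T^M(\N, \rho, \epsilon)$. Let $\rho^{A_{in} B_{in} R}$ be a purification of the input. Using the definition of $\N$ and $\Pi$, the triangle inequality, and the bound on the error of each $\Pi^i$, I compute
\begin{align*}
\| \Pi(\rho) - \N(\rho) \|_{A_{out} B_{out} R}
&= \| p (\Pi^1(\rho) - \N_1(\rho)) + (1-p)(\Pi^2(\rho) - \N_2(\rho)) \|_{A_{out} B_{out} R} \\
&\leq p \, \| \Pi^1(\rho) - \N_1(\rho) \|_{A_{out} B_{out} R} + (1-p) \, \| \Pi^2(\rho) - \N_2(\rho) \|_{A_{out} B_{out} R} \\
&\leq p \epsilon_1 + (1-p) \epsilon_2 = \epsilon,
\end{align*}
so indeed $\Pi$ succeeds at the task $(\N, \rho, \epsilon)$ using at most $M$ messages.

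Third, by definition of $QIC^M(\N, \rho, \epsilon)$ as an infimum over $\T^M(\N, \rho, \epsilon)$, and since $M_1, M_2 \leq M$ implies $\T^{M_i} \subseteq \T^M$ trivially for the unbounded case,
\begin{align*}
QIC^M(\N, \rho, \epsilon) \leq QIC(\Pi, \rho) \leq p \, QIC^{M_1}(\N_1, \rho, \epsilon_1) + (1-p) \, QIC^{M_2}(\N_2, \rho, \epsilon_2) + \epsilon'.
\end{align*}
Letting $\epsilon' \to 0$ yields the bounded-round bound. The unbounded-round statement follows by the same argument applied to protocols from $\T(\N_i, \rho, \epsilon_i)$ without any restriction on rounds, since the protocol $\Pi$ produced by the previous lemma always has finite round complexity. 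There is no serious obstacle here: the protocol-level convexity lemma does essentially all the work, and the only thing to check is that the resulting composite protocol lies in the correct set of valid protocols, which is immediate from joint convexity of trace distance (equivalently, the triangle inequality applied to the convex combination above).
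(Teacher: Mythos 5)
Your proof is correct and follows essentially the same route as the paper: pick near-optimal protocols for the two tasks, apply the protocol-level convexity lemma to obtain the mixed protocol, verify membership in $\T^M(\N,\rho,\epsilon)$ via the triangle inequality for the trace distance, and pass to the infimum. The only (immaterial) difference is bookkeeping of the slack term, where your $\epsilon'$ is in fact slightly tighter than the paper's $2\delta$.
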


\begin{proof}
Let $\Pi^1$ and $\Pi^2$ be protocols satisfying,
 for $i \in \{ 1, 2\}, \Pi^i \in \T (\N_i, \rho, \epsilon_i), QIC (\Pi^i, \rho) 
\leq QIC (\N_i, \rho, \epsilon_i) + \delta$ for an arbitrary 
small $\delta > 0$, and take the corresponding protocol $\Pi$ of Lemma \ref{lem:conv}.
We first verify that protocol $\Pi$ successfully accomplish its task. The result follows from the triangle inequality for the trace distance:
\begin{align*}
\| \Pi (\rho) - \N (\rho) \|_{A_{out} B_{out} R} &= \| p \Pi^1 (\rho) + (1-p) \Pi^2 (\rho) 
- (p \N_1 + (1-p) \N_2) (\rho) \|_{A_{out} B_{out} R} \\
&\leq \| p \Pi^1 (\rho)  -  p\N_1 (\rho) \|_{A_{out} B_{out} R} \\
& + \|  (1-p) \Pi^2 (\rho) - (1-p) \N_2 (\rho) \|_{A_{out} B_{out} R} \\
&\leq p \epsilon_1 + (1-p) \epsilon_2 \\
&= \epsilon.
\end{align*}
We must now verify that the quantum information cost satisfies the convexity property:
\begin{align*}
 QIC(\N, \rho, \epsilon) & \leq
		QIC (\Pi, \rho)  \\
		&= p QIC (\Pi^1, \rho) +  (1-p) QIC (\Pi^2, \rho) \\
		&\leq p QIC(\N_1, \rho, \epsilon_1) + (1-p) QIC(\N_2, \rho, \epsilon_2) + 2 \delta.
\end{align*}
Keeping track of rounds, we get the bounded round result.
\end{proof}

We get as a corollary that quantum information complexity is continuous
 in its error parameter. Note that quantum information complexity is decreasing in the error parameter.

\begin{corollary} 
\label{cor:cty}
For any $\N, \rho, \epsilon \in (0, 2], \delta > 0, M \in \mathbb{N}$, there exists an $0 < \epsilon^\prime < \epsilon$ such that the following holds:
\begin{align*}
0 \leq  QIC(\N, \rho, \epsilon- \epsilon^\prime) & - QIC(\N, \rho, \epsilon)   \leq \delta, \\
0 \leq  QIC^M (\N, \rho, \epsilon- \epsilon^\prime) & - QIC^M (\N, \rho, \epsilon)   \leq \delta.
\end{align*}
\end{corollary}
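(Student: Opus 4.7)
The plan is to derive both inequalities directly from the convexity corollary (Corollary~\ref{lem:conv}) applied with $\N_1 = \N_2 = \N$ and with error parameters $\epsilon_1 = \epsilon$, $\epsilon_2 = 0$. The first (lower) inequality is immediate: since every protocol in $\T(\N,\rho,\epsilon-\epsilon')$ also lies in $\T(\N,\rho,\epsilon)$ (and similarly for the bounded-round variant, since restricting to $M$ messages is independent of the error criterion), the infimum defining $QIC$ can only decrease as the allowed error grows, so $QIC(\N,\rho,\epsilon-\epsilon') \geq QIC(\N,\rho,\epsilon)$.

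For the upper bound, I would pick a parameter $p \in (0,1)$ close to $1$ and set $\epsilon' = \epsilon(1-p) \in (0,\epsilon)$, so that $p \epsilon + (1-p)\cdot 0 = \epsilon - \epsilon'$. Applying Corollary~\ref{lem:conv} with $\N_1 = \N_2 = \N$, $\epsilon_1 = \epsilon$, $\epsilon_2 = 0$ yields
\begin{align*}
QIC(\N, \rho, \epsilon - \epsilon') \leq p\, QIC(\N, \rho, \epsilon) + (1-p)\, QIC(\N, \rho, 0),
\end{align*}
which rearranges to
\begin{align*}
QIC(\N, \rho, \epsilon - \epsilon') - QIC(\N, \rho, \epsilon) \leq (1-p)\bigl(QIC(\N, \rho, 0) - QIC(\N, \rho, \epsilon)\bigr).
\end{align*}

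The key observation making this effective is that $QIC(\N,\rho,0)$ is finite: by Lemma~\ref{lem:qicvsqcc} and the remark on $QCC$, we have $QIC(\N,\rho,0) \leq QCC(\N,\rho,0) \leq \log\dim(A_{in}) + \log\dim(A_{out})$. Hence the right-hand side is bounded by $(1-p)\bigl(\log\dim(A_{in}) + \log\dim(A_{out})\bigr)$, and choosing
\begin{align*}
1 - p = \frac{\delta}{\log\dim(A_{in}) + \log\dim(A_{out}) + 1}
\end{align*}
(the $+1$ handles the degenerate case where the log-dimensions vanish) guarantees the difference is at most $\delta$, giving the desired $\epsilon' = \epsilon(1-p) \in (0,\epsilon)$. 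The bounded-round statement follows by the identical argument, using the $M$-message version of the convexity corollary with $M_1 = M_2 = M$, noting that $\max(M,M) = M$ so no blow-up of the round complexity occurs.

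The only mild subtlety I foresee is checking that the convexity bound is applicable at the endpoint $\epsilon_2 = 0$; this is fine because the convexity lemma is stated for any $\epsilon_1,\epsilon_2 \in [0,2]$ and, in particular, $\T(\N,\rho,0)$ is nonempty (it contains, e.g., a straightforward dilation of $\N$), so $QIC(\N,\rho,0)$ is a well-defined finite quantity. No step requires substantial calculation; the proof is essentially a one-line appeal to convexity plus a bound on $QIC(\N,\rho,0)$.
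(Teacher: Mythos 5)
Your proposal is correct and follows exactly the route the paper intends: the corollary is stated as an immediate consequence of the convexity result, obtained by mixing $(\N,\epsilon)$ with $(\N,0)$ and using the finiteness of $QIC(\N,\rho,0)$ via $QIC \leq QCC \leq \log\dim(A_{in}) + \log\dim(A_{out})$, together with monotonicity in the error parameter for the lower bound. The paper leaves these details implicit; you have filled them in correctly, including the bounded-round case.
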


We now show that quantum information is concave in its input state parameter.

\begin{lemma} 
For any $p \in [0, 1]$, define $\rho = p\rho_1 + (1-p) \rho_2$ for any two input states $\rho_1, \rho_2$.
Then the following holds for any protocol $\Pi$:
\begin{align*}
 QIC(\Pi, \rho) \geq p QIC(\Pi, \rho_1) + (1-p) QIC(\Pi, \rho_2).
\end{align*}
\end{lemma}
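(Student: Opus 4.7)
The plan is to lift the decomposition $\rho=p\rho_1+(1-p)\rho_2$ to a joint purification that carries an explicit classical selector register, then apply a dephasing inside the reference system and expand the resulting conditional mutual information using the chain rule and the classical-conditioning identity recorded in the preliminaries.

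Concretely, I would fix purifications $|\psi_j\rangle^{A_{in}B_{in}R_j}$ of $\rho_j$ for $j\in\{1,2\}$, introduce a two-dimensional ancilla $X$, and define the pure state
\[
|\sigma\rangle^{A_{in}B_{in}R_1R_2X} = \sqrt{p}\,|\psi_1\rangle|0\rangle^{R_2}|1\rangle^X + \sqrt{1-p}\,|\psi_2\rangle|0\rangle^{R_1}|2\rangle^X.
\]
Tracing out $R:=R_1R_2X$ kills the cross terms thanks to $\langle 1|2\rangle^X=0$ and returns $\rho$, so $|\sigma\rangle$ is a valid purification. Since $QIC$ does not depend on the choice of purification, I would evaluate it with respect to this one. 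Because $\Pi$ acts trivially on $R$, the state after the $i$-th unitary retains the branched form
\[
|\phi_i\rangle = \sqrt{p}\,|\phi_i^1\rangle|0\rangle^{R_2}|1\rangle^X + \sqrt{1-p}\,|\phi_i^2\rangle|0\rangle^{R_1}|2\rangle^X,
\]
where $|\phi_i^j\rangle$ denotes the state after $i$ steps when $\Pi$ is run on input purification $|\psi_j\rangle$.

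Next, I would apply the dephasing channel $\Delta_X$ on the $X$-factor of $R$ and invoke the data processing inequality to obtain $I(C_i;R\mid B_{i-1})_{\phi_i} \geq I(C_i;R\mid B_{i-1})_{\Delta_X(\phi_i)}$. Once $X$ is classical, the chain rule combined with the classical-conditioning identity gives
\[
I(C_i;R\mid B_{i-1})_{\Delta_X(\phi_i)} = I(C_i;X\mid B_{i-1}) + p\,I(C_i;R_1\mid B_{i-1})_{\phi_i^1} + (1-p)\,I(C_i;R_2\mid B_{i-1})_{\phi_i^2},
\]
where on branch $X=j$ the dummy register $R_{3-j}$ is in the fixed pure state $|0\rangle$ and is discarded. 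Strong subadditivity gives $I(C_i;X\mid B_{i-1})\geq 0$, so dividing by two and summing over odd $i$ (and symmetrically over even $i$ with $A_{i-1}$ replacing $B_{i-1}$) identifies the two branch contributions with $QIC(\Pi,\rho_1)$ and $QIC(\Pi,\rho_2)$ and delivers the claim.

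The only nontrivial ingredient is the data processing step; everything else is bookkeeping. The main thing to verify carefully is that the branched structure of $|\phi_i\rangle$ is genuinely preserved throughout (which follows because the protocol never touches $R_1$, $R_2$ or $X$) and that the conditioning registers $B_{i-1}$ and $A_{i-1}$ take the same form on both branches, so that the $X=j$ restriction of each CMI really coincides with the round-$i$ contribution to $QIC(\Pi,\rho_j)$. I expect this register-tracking across the two branches to be the main obstacle; every information-theoretic move after that is a direct invocation of identities already recorded in the preliminaries.
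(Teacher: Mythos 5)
Your proposal is correct and follows essentially the same route as the paper's proof: purify $\rho$ with a selector register placed in the reference, dephase it, apply data processing, then use the chain rule, non-negativity of conditional mutual information, and the classical-conditioning identity to split into the two branches. The only cosmetic difference is that you use two separate purifying registers $R_1,R_2$ padded with $\ket{0}$ states, whereas the paper reuses a single reference register $R$ for both purifications; this does not change the argument.
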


\begin{proof}
Consider purifications $\rho_i^{A_{in} B_{in} R}$ of each $\rho_i$.
By introducing a new selector reference subsystem $S$, we have the purification 
$\ket{\rho}^{A_{in} B_{in} R S} = \sqrt{p} \ket{\rho_1}^{A_{in} B_{in} R} \ket{1}^S + \sqrt{1-p} \ket{\rho_2}^{A_{in} B_{in} R} \ket{2}^S$. Also consider the state $\hat{\rho}^{A_{in} B_{in} R \hat{S}} =\Delta_S (\rho^{A_{in} B_{in} R S})$ obtained if
the $S$ selector system was passed through a measurement channel $\Delta^{S \rightarrow \hat{S}}$ to obtain classical system $\hat{S}$.
We must verify that the quantum information cost satisfies the concavity property.
We consider information quantities taken with respect to the protocol $\Pi$ run on different inputs:
$\rho^{A_{in} B_{in} RS}, \hat{\rho}^{A_{in} B_{in} R \hat{S}}, \rho_2^{A_{in} B_{in} R}, \rho_2^{A_{in} B_{in} R}$ , and identify with respect to which state we are considering by a subscript on the corresponding conditional mutual informations.
\begin{align*}
		QIC (\Pi, \rho) 
		&= I(C_1 ; R S | B_0)_\rho + I(C_2 ; R S | A_1)_\rho + \cdots \\
		& \geq I(C_1 ; R \hat{S} | B_0)_{\hat{\rho}} + I(C_2 ; R \hat{S} | A_1)_{\hat{\rho}} + \cdots \\
		& = I(C_1 ; \hat{S} | B_0)_{\hat{\rho}} + I(C_1 ; R | \hat{S} B_0)_{\hat{\rho}} + I(C_2 ; \hat{S} | A_1)_{\hat{\rho}} + I(C_2 ; R | \hat{S} A_1)_{\hat{\rho}} + \cdots \\
		& \geq I(C_1 ; R | \hat{S} B_0)_{\hat{\rho}} + I(C_2 ; R | \hat{S} A_1)_{\hat{\rho}} + \cdots \\
		& = p (I(C_1 ; R | B_0)_{\rho_1} + I(C_2 ; R | A_1)_{\rho_1} + \cdots) \\
			& + (1-p) (I(C_1 ; R | B_0)_{\rho_2} + I(C_2 ; R | A_1)_{\rho_2} + \cdots) \\
		& = p QIC(\Pi, \rho_1) + (1-p) QIC(\Pi, \rho_2).
\end{align*}
The first equality is by definition of quantum communication cost of $\Pi$ on input $\rho$,
the first inequality uses the data processing inequality for conditional mutual information,
the second equality is by the chain rule for conditional mutual information,
the second inequality is by non-negativity of conditional mutual information,
the third equality uses a convex rewriting of conditional mutual information,
and the last equality is from definition.
\end{proof}

\subsection{Proof of Theorem \ref{th:qic}}
	\label{sec:qicam}

We now have all the tools to prove Theorem \ref{th:qic}, that states that
for any channel $\N$, any input state $\rho$ and any error parameter $\epsilon \in (0, 2]$,
$QIC (\N, \rho, \epsilon) = AQCC (\N, \rho, \epsilon)$. At $\epsilon = 2$, both quantities vanish.
The interesting regime is for $\epsilon \in (0, 2)$. Note that keeping track of rounds, we get the result for bounded rounds.

For the direct coding part, to prove the $\geq$ direction, we take an arbitrarily small $\delta > 0$,
and use Corollary \ref{cor:cty} to find an $0 < \epsilon^\prime < \epsilon$ such that
$QIC (\N, \rho, \epsilon - \epsilon^\prime) \leq QIC (\N, \rho, \epsilon) + \delta$.
We then consider a protocol $\Pi \in \T (\N, \rho, \epsilon - \epsilon^\prime)$ 
satisfying $QIC (\Pi, \rho) \leq QIC (\N, \rho, \epsilon - \epsilon^\prime) + \delta$.
We now use Lemma \ref{lem:coding} to find, for any sufficiently large $n$, a protocol $\Pi_n \in \T (\Pi^{\otimes n}, \rho^{\otimes n}, \epsilon^\prime)$ satisfying $\frac{1}{n} QCC (\Pi_n) \leq QIC(\Pi, \rho) + \delta$. We then have the following chain of inequality:
\begin{align*}
\frac{1}{n} QCC (\Pi_n) & \leq QIC(\Pi, \rho) + \delta \\
		& \leq QIC (\N, \rho, \epsilon - \epsilon^\prime) + 2 \delta \\
		& \leq QIC (\N, \rho, \epsilon) + 3 \delta.
\end{align*}
Since $\delta> 0$ is arbitrarily small and this holds for all sufficiently large $n$, we only have to verify that $\Pi_n \in \T_n (\N^\otimes, \rho^{\otimes n}, \epsilon)$ to complete the proof of the $\geq$ direction. We have for each $i \in [n]$,
\begin{align*}
 \| \Tra{\neg A_{out}^i B_{out}^i} \Pi_n (\rho^{\otimes n}) - \N (\rho) \| & \leq
 \| \Tra{\neg A_{out}^i B_{out}^i} \Pi_n (\rho^{\otimes n}) - \Pi (\rho) \| \\
	& +	\| \Pi (\rho) - \N (\rho) \| \\
	& \leq \| \Pi_n (\rho^{\otimes n}) - \Pi^{\otimes n} (\rho^{\otimes n}) \|  + \epsilon - \epsilon^\prime \\
	& \leq \epsilon,
\end{align*}
in which we first use the triangle inequality, and then monotonicity of the trace distance under partial trace.

For the converse part, to prove the $\leq$ direction, we combine Corollary \ref{cor:add} and Corollary \ref{cor:qicnqccn}, and get the result since the following holds for all $n$:
\begin{align*}
QIC (\N, \rho, \epsilon) &= \frac{1}{n} QIC_n (\N^{\otimes n}, \rho^{\otimes n}, \epsilon) \\
	&\leq \frac{1}{n} QCC_n (\N^{\otimes n}, \rho^{\otimes n}, \epsilon).
\end{align*}

\section{Quantum Information Complexity of Functions}

\subsection{Error for Classical Functions and Inputs}
\label{sec:cltd}

A potential application of the quantum information complexity paradigm is to prove quantum communication 
complexity lower bounds for classical functions.
Hence, we want to make sure that the notion of distance we use for 
quantum channels and states is also an interesting one in such context.
First, remember that for classical states $\rho^{A B} = \sum_{x, y} p_{XY} (x, y) \kb{x}{x}^A \otimes \kb{y}{y}^B$,
we can use a canonical basis $\{ \ket{xy} \}$ to obtain a purification 
$\ket{\rho}^{A B R} = \sum_{x, y} \sqrt{p_{XY} (x, y)} \ket{x}^A \ket{y}^B \ket{xy}^R$.
Then, for $\Delta_R$ the measurement channel in the $\ket{xy}$ basis on system $R$, 
we have
\begin{align*}
 \| (\Pi \otimes \Delta_R) (\rho) - (\N \otimes \Delta_R) (\rho) \|_{ABR} \leq
	\| \Pi (\rho) - \N (\rho) \|_{ABR}
\end{align*}
by the monotonicity of the trace distance under noisy channels.
For classical functions, we consider channels $\N$ such that we have
\begin{align}
\N (\kb{x}{x}^{A_{in}} \otimes \kb{y}{y}^{B_{in}}) = \kb{f_A (x,y)}{f_A (x, y)}^{A_{out}} \otimes \kb{f_B (x, y)}{f_B (x, y)}^{B_{out}}
\end{align}
 and then on classial inputs we have
\begin{align}
(\N \otimes \Delta_R) (\rho^{ABR}) = \sum_{x, y} p_{XY} (x, y) \kb{f_A(x, y)}{f_A(x, y)}^{A_{out}} \otimes \kb{f_B (x, y)}{f_B (x, y)}^{B_{out}} \otimes \kb{xy}{xy}^R. 
\end{align}
We get
\begin{align*}
 \| (\Pi \otimes \Delta_R) (\rho) &- (\N \otimes \Delta_R) (\rho) \|_{A_{out} B_{out} R} \\
	&= 	\| \sum_{x, y} p_{XY} (x, y) \kb{xy}{xy}^R \otimes (\Pi(\kb{x}{x} \otimes \kb{y}{y}) \\
	& - \kb{f_A (x, y)}{f_A (x, y)} \otimes \kb{f_B (x, y)}{f_B (x, y)} ) \|_{A_{out} B_{out} R} \\
	&= 	\sum_{x, y} p_{XY} (x, y) \| \Pi(\kb{x}{x} \otimes \kb{y}{y}) \\
	&-  \kb{f_A(x, y)}{f_A(x, y)} \otimes \kb{f_B (x, y)}{f_B (x, y)} \|_{A_{out} B_{out}},
\end{align*}
so if we further apply the measurement channel $\Delta_{AB}$ in the output basis,
\begin{align*}
	 \| (\Delta_{AB} \circ \Pi \otimes \Delta_R) (\rho) &- ( \Delta_{AB} \circ \N \otimes \Delta_R) (\rho) \|_{A_{out} B_{out} R} \\
	&= 	\sum_{x, y} p_{XY} (x, y) \| \sum_{z_A, z_B} p_{Z_A Z_B | \Pi (x, y)} (z_A, z_B | \Pi (x, y)) \kb{z_A}{z_A} \otimes \kb{z_B}{z_B} \\
	& - \kb{f_A (x, y)}{f_A (x, y)} \otimes \kb{f_B (x, y)}{f_B (x, y)}  \|_{A_{out} B_{out}}  \\
	&= 	\sum_{x, y} p_{XY} (x, y)  ((1 - p_{Z_A Z_B | \Pi (x, y)} (z_A = f_A (x, y), z_B = f_B (x, y) | \Pi (x, y))) \\
	& + \sum_{(z_A, z_B) \not= (f_A (x, y), f_B (x, y))} p_{Z_A Z_B | \Pi (x, y)} (z_A, z_B | \Pi (x, y)))  \\
	&=	\sum_{x, y} p_{XY} (x, y) (2 Pr[\Delta_{AB} \circ \Pi (x, y) \not= (f_A(x,y), f_B (x, y))]),
\end{align*}
and so applying monotonicity of the trace distance one last time, we get the following result.
\begin{lemma}
	For classical functions $f_A, f_B$, a channel $\N (\kb{x}{x} \otimes \kb{y}{y}) = \kb{f_A(x,y)}{f_A (x, y)} \otimes \kb{f_B (x, y)}{f_B (x, y)}$ 
	and a classical input state $\rho^{A_{in} B_{in}} = \sum_{x, y} p_{XY} (x, y) \kb{x}{x}^A \otimes \kb{y}{y}^B$,
	the following holds for any $\Pi (\N, \rho, \epsilon)$:
\begin{align*}
	\sum_{x, y} p_{XY} (x, y) (Pr[\Delta_{AB} \circ \Pi (x, y) \not= (f_A (x, y), f_B (x, y)) ]) \leq \frac{\epsilon}{2}
\end{align*}
\end{lemma}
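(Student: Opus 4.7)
The plan is to show that the trace distance error bound in the definition of $\Pi \in \T(\N, \rho, \epsilon)$ directly implies an average-case bound on the failure probability, by processing both the output and the reference system through measurement channels and using only monotonicity of the trace distance and joint linearity for classical systems. Essentially all the computation is already present in the paragraph preceding the lemma statement; my role in the proof would be to assemble these steps in the right order and invoke the right inequalities.

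First I would fix the canonical purification $\ket{\rho}^{A_{in} B_{in} R} = \sum_{x,y}\sqrt{p_{XY}(x,y)}\ket{x}^{A_{in}}\ket{y}^{B_{in}}\ket{xy}^R$ of the classical input, so that the starting assumption $\|\Pi(\rho) - \N(\rho)\|_{A_{out} B_{out} R} \le \epsilon$ is well-defined. I would then apply the measurement channel $\Delta_R$ (in the $\{\ket{xy}\}$ basis) to the reference register. By monotonicity of trace distance under noisy channels,
\begin{align*}
\| (\Pi \otimes \Delta_R)(\rho) - (\N \otimes \Delta_R)(\rho) \|_{A_{out} B_{out} R} \le \epsilon.
\end{align*}
The measurement turns $R$ into a classical register labeling $(x,y)$, so both states are classical on $R$ and joint linearity of trace distance allows me to rewrite the left-hand side as $\sum_{x,y} p_{XY}(x,y) \| \Pi(\kb{x}{x} \otimes \kb{y}{y}) - \kb{f_A(x,y)}{f_A(x,y)} \otimes \kb{f_B(x,y)}{f_B(x,y)} \|_{A_{out} B_{out}}$.

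Next I would apply the output-basis measurement channel $\Delta_{AB}$ to the $A_{out} B_{out}$ registers; again by monotonicity, the resulting quantity is still bounded by $\epsilon$. Now both states are fully classical: the simulated state is the distribution induced by $\Delta_{AB}\circ\Pi$ on input $(x,y)$, and the target is a point mass at $(f_A(x,y), f_B(x,y))$. The trace distance between a probability distribution and a point mass at $z^*$ is exactly $2(1 - p(z^*))$, which equals $2 \cdot \Pr[\text{output} \ne z^*]$. Substituting this identification gives $\sum_{x,y} p_{XY}(x,y) \cdot 2 \Pr[\Delta_{AB}\circ\Pi(x,y) \ne (f_A(x,y), f_B(x,y))] \le \epsilon$, and dividing by $2$ yields the lemma.

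There is no real obstacle here: the proof is a careful chain of monotonicity steps punctuated by the explicit evaluation of the trace distance between a classical distribution and a delta distribution. The one place that deserves care is checking that, after the two measurement channels, the classical joint linearity identity applies and the reference register is genuinely classical (this is where the specific canonical purification matters, so that $R$ correctly indexes $(x,y)$ rather than sitting in an arbitrary basis). Once these are in place, the bound follows by a single line of arithmetic.
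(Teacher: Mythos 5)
Your proposal is correct and follows essentially the same route as the paper: measure the reference register in the canonical basis, use monotonicity and joint linearity to reduce to a per-input trace distance, measure the output registers, and evaluate the trace distance between the induced output distribution and the point mass at $(f_A(x,y), f_B(x,y))$ as twice the failure probability. The identification of the canonical purification and the final arithmetic match the computation carried out in the paragraphs preceding the lemma.
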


\subsection{Reduction of Disjointness to AND}

In this section, we would like to establish a relationship between the quantum information cost of protocols
computing the disjointness function and those computing the AND of two bits.
The disjointness function is defined as $DISJ_n (x, y) = \neg (\bigvee_{i \in [n]} x_i \wedge y_i)$,
for bit strings $x = x_1 \cdots x_n, y = y_1 \cdots y_n$ and $x_i \wedge y_i$ the AND of the two bits $x_i, y_i$.
Given any protocol $\Pi_D$ computing $DISJ_n$ with error at most $\epsilon$ on all $n$-bit input, 
we can also use it to compute AND with error at most $\epsilon$ on all single bit input by 
setting $n-1$ inputs to $0$ on one side. Consider any
distribution $\mu$ on $00, 01,10$, and the corresponding state 
$\sigma_\mu =  \mu (00) \kb{00}{00} + \mu (01) \kb{01}{01} + \mu (10) \kb{10}{10}$.
We define in this way $n$ different protocols $\Pi_i$ for AND on an arbitrary input $\rho$, by setting, for
$i \in [n]$, the $i$-th input to $\Pi_D$ to the input $\rho$ of the AND instance, and the
$n-1$ remaining inputs to $\sigma_\mu^{\otimes n-1}$. We further combine them in a protocol $\Pi_A$ that is their average.
Running the average protocol on input $\sigma_\mu$, we get the following result.

\begin{theorem}
For any $M$-message protocol $\Pi_D$ computing $DISJ_n$ with error $\epsilon \in [0, 2]$ on all inputs, there exists a $M$-message protocol $\Pi_A$ computing
AND with error $\epsilon$ on all inputs and satisfying the following:
\begin{align*}
QIC(\Pi_A, \sigma_\mu) = \frac{1}{n} QIC(\Pi_D, \sigma_\mu^{\otimes n}),
\end{align*}
for the state $\sigma_\mu =  \mu (00) \kb{00}{00} + \mu (01) \kb{01}{01} + \mu (10) \kb{10}{10}$, and any probabitlity distribution $\mu$ on $\{00, 01, 10 \}$.
\end{theorem}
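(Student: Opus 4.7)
The plan is to combine the additivity of quantum information complexity on product inputs (Lemma~\ref{lem:add2}) with the convexity of quantum information cost under mixtures of protocols (Lemma~\ref{lem:conv}). More precisely, I would first express $QIC(\Pi_D, \sigma_\mu^{\otimes n})$ as a sum of QIC contributions from the $n$ restricted protocols $\Pi_i$, and then average these restricted protocols to obtain $\Pi_A$, using convexity to turn the sum-over-$i$ into a $\frac{1}{n}$ factor.

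For the first step, I would apply Lemma~\ref{lem:add2} inductively to the $n$-fold product input $\sigma_\mu^{\otimes n}$. A single application splits $\Pi_D$ on $\sigma_\mu \otimes \sigma_\mu^{\otimes n - 1}$ into two protocols whose QIC's add up; one of them acts on a single $\sigma_\mu$ (with the remaining $n-1$ copies supplied as pre-shared entanglement, together with their purifying references distributed appropriately between Alice and Bob), and the other acts on $\sigma_\mu^{\otimes n-1}$. Iterating $n-1$ times and noting that the operational description in Lemma~\ref{lem:add2} matches exactly the construction of $\Pi_i$ in the theorem statement (fix the $n-1$ non-target inputs as pre-shared purified entanglement, run $\Pi_D$, output the registers corresponding to the target coordinate), this yields
\begin{align*}
QIC(\Pi_D, \sigma_\mu^{\otimes n}) \;=\; \sum_{i=1}^n QIC(\Pi_i, \sigma_\mu).
\end{align*}
For the second step, since $\Pi_A = \frac{1}{n}\sum_i \Pi_i$ is a uniform mixture of $M$-message protocols, Lemma~\ref{lem:conv} applied inductively (writing the $n$-way mixture as nested binary mixtures) gives $QIC(\Pi_A, \sigma_\mu) = \frac{1}{n}\sum_i QIC(\Pi_i, \sigma_\mu)$. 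Chaining the two identities produces the desired equality.

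For the error analysis I would observe that because $\sigma_\mu$ is supported on $\{00,01,10\}$, whenever $n-1$ coordinates are drawn from $\sigma_\mu$ the function $DISJ_n$ restricts to (the negation of) the AND on the remaining coordinate; hence every $\Pi_i$ computes AND (up to an output NOT, which is free and preserves QIC) with error at most $\epsilon$ on every single-bit input, and by the triangle inequality for the trace distance the uniform average $\Pi_A$ inherits the same $\epsilon$ bound on every input. The main subtlety in the argument, and the place to be careful, is the bookkeeping in the iterated application of Lemma~\ref{lem:add2}: one has to verify that the purifying reference of $\sigma_\mu^{\otimes n}$ can be split as $R_1 \otimes \cdots \otimes R_n$ in a way compatible with the recursive splitting, and that at each stage the conditioning registers in the definition of QIC (the accumulated Alice/Bob side information together with the purifications of the already-fixed coordinates) align on both sides of the identity; this is exactly the chain-rule rearrangement used in the proof of Lemma~\ref{lem:add2}, applied one extra coordinate at a time.
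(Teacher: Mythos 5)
Your proof is correct, but it is organized quite differently from the paper's. The paper does not invoke Lemma~\ref{lem:add2} or Lemma~\ref{lem:conv} at all here: it builds a single explicit protocol $\Pi_A$ with one $n$-ary coherent selector register $\ket{\theta}=\sum_i\tfrac{1}{\sqrt n}\ket{i}^{S_A}\ket{i}^{S_B}$ that controls which coordinate of $\Pi_D$ receives the live input, and then performs one monolithic computation (convex rewriting over the classical selector, dropping product registers from the conditioning, and a recursive chain rule over the references $R^1\cdots R^n$) to land directly on $\tfrac1n QIC(\Pi_D,\sigma_\mu^{\otimes n})$. Your route factors this same underlying manipulation through the two structural lemmas: the iterated Lemma~\ref{lem:add2} supplies $QIC(\Pi_D,\sigma_\mu^{\otimes n})=\sum_i QIC(\Pi_i,\sigma_\mu)$ exactly (the chain-rule step), and the iterated Lemma~\ref{lem:conv} supplies the exact $\tfrac1n$-averaging (the selector step), with the nested binary mixture implementing the same channel $\tfrac1n\sum_i\Pi_i$ with the same message count. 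This is cleaner and reuses proven machinery at the cost of producing a structurally different (nested-control) protocol than the paper's, which is fine since the theorem only asserts existence. Two bookkeeping points you should make explicit: the iterated application of Lemma~\ref{lem:add2} assigns the purifying registers $R^j$ of the fixed coordinates to Alice or Bob according to the order in which you peel off coordinates, and this assignment is the mirror image of the paper's description of $\Pi_i$; this is harmless because the identity is exact for whatever protocols the iteration actually produces, but you should not claim the $\Pi_i$ are literally the paper's. Second, what the restriction of $DISJ_n$ on $\sigma_\mu$-supported coordinates computes is $\neg\mathrm{AND}$, so a final local NOT (free for both QIC and communication) is needed, as you note.
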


\begin{proof}

We first notice that we can replace the quantum criteria of error by the classical one for worst case input and obtain the same result.
We fix $\mu$ and simply write $\sigma$ instead of $\sigma_\mu$.
We start by defining the $n$ protocols $\Pi_i$ discussed above, and then use them to define the average protocol $\Pi_A$.

\hrulefill

Protocol $\Pi_i$ on input $\rho$: \\
-Let $\psi_D$ be the entangled state used in $\Pi_D$, and consider a purification
 $\sigma_j^{A_{in}^j B_{in}^j R^j}$ to $\sigma_j$, the $j$-th $\sigma$ input to $\Pi_D$.
The entangled state for the protocol will be $\psi_D \otimes \sigma^{\otimes n-1}$, with $\sigma$'s for all
$j$ different than $i$. The registers $T_A, A_{in}^1, \cdots, A_{in}^{i-1}, A_{in}^{i+1}, \cdots A_{in}^n, R^1,
\cdots R^{i-1}$ are given to Alice, and $T_B, B_{in}^1, \cdots, B_{in}^{i-1}, B_{in}^{i+1}, \cdots B_{in}^n, R^{i+1},
\cdots R^n$ to Bob. \\
-Using the $\sigma_j$ states given as pre-shared entanglement to simulate 
the other inputs, run protocol $\Pi_D$ on input $\sigma_1 \otimes \cdots \otimes \sigma_{i-1} \otimes \rho \otimes \sigma_{i+1} \otimes \cdots \otimes \sigma_n$. \\
-Take as output the $A_{out}, B_{out}$ registers of protocol $\Pi_D$.

\hrulefill

From these protocols, we define a new one for AND that is their uniform average.

\hrulefill

Protocol $\Pi_A$ on input $\rho$: \\
-The entangled state $\psi_A$ contains many parts: it contains the $\psi_D^{T_A T_B}$ entangled state for $\Pi_D$, as well as $2n$ $\sigma$ states,
with the $R^j$ register given to Alice for the first $n$ of them, and the $R^j$ register given to Bob for the $n$ last.
Denote $D_A = A_{in}^1 \otimes \cdots \otimes A_{in}^{2n} \otimes R^1 \otimes \cdots \otimes R^n, D_B = B_{in}^1 \otimes \cdots \otimes B_{in}^{2n} \otimes R^{n+1} \otimes \cdots \otimes R^{2n}$ that correspond to Alice's and Bob's share of theses states, respectively. The entangled state also contains some padding pure states $\phi^{A_{in}^{\otimes (n-1)} B_{in}^{\otimes (n-1)}}$ to be swapped with the selected $\sigma$'s, held in registers $P_A, P_B$, and it also contains selector 
registers in state $\ket{\theta} = \sum_i \frac{1}{\sqrt{n}} \ket{i}^{S_A} 
\ket{i}^{S_B}$. \\
-Coherently control what to input into the $\Pi_D$ protocol: on control set 
to $i$, input state $\rho$ into registers $A_{in}^i B_{in}^i$, use $\sigma_{1}, \cdots, \sigma_{i-1}$ in registers $A_{in}^j B_{in}^j$ for $j < i$, and use $\sigma_{n + i +1}, \cdots, \sigma_{2n}$ in registers $A_{in}^j B_{in}^j$ for $j > i$.
Swap the padding pure states in $P_A, P_B$ into the $D_A, D_B$ registers that are used as input to $\Pi_D$. \\
-Run protocol $\Pi_D$ on given inputs until it has finished. \\
-Take as output the $A_{out}, B_{out}$ registers of protocol $\Pi_D$.

\hrulefill

We first verify that protocol $\Pi_A$ successfully accomplish its task. 
It is clear that each $\Pi_i$ as well as $\Pi_A$ are also $M$ messages protocols.
Note that by 
the structure of the above protocol and because the selector registers are traced 
out at the end, the output is of the form $\Pi_A (\rho) = \sum_i \frac{1}{n} 
\rho_{out}^i$ for $\rho_{out}^i = \Pi_i (\rho)$ the output of protocol $\Pi_i$.
Further, each $\Pi_i$ satisfies $\Pi_i (\rho) = \Pi_D (\sigma_1 \otimes \cdots \otimes \sigma_{i-1} \otimes \rho \otimes \sigma_{i+1} \cdots \otimes \sigma_n)$.
Then, the result follows from the triangle inequality for the trace distance
and the structure of the different protocols and channels, using the completely classical channels
$\N_A = AND \circ \Delta_{A_{in} B_{in}}, \N_D = DISJ_n \circ \Delta_{A_{in} B_{in}}$ to ensure that indeed,
for any classical input $\rho$ and any $i \in [n]$, 
$\N_A (\rho) = \N_D (\sigma_1 \otimes \cdots \otimes \sigma_{i-1} \otimes \rho \otimes \sigma_{i+1} \cdots \otimes \sigma_n)$:
\begin{align*}
\| \Pi_A (\rho) - \N_A (\rho) \|_{A_{out} B_{out} R} &= \| \sum_i \frac{1}{n} \rho_{out}^i  - \sum_i \frac{1}{n} \N_A (\rho) \|_{A_{out} B_{out} R} \\
& \leq \sum_i \frac{1}{n} \|  \rho_{out}^i  -  \N_A (\rho) \|_{A_{out} B_{out} R} \\
& = \sum_i \frac{1}{n} \|  \rho_{out}^i  -  
\N_D (\sigma_1 \otimes \cdots \otimes \sigma_{i-1} \otimes \rho \otimes \sigma_{i+1} \cdots \otimes \sigma_n) \|_{A_{out} B_{out} R} \\
&\leq \sum_i \frac{1}{n} \epsilon \\
&= \epsilon.
\end{align*}
With a classical error parameter, the result follows by a similar average argument.
We must now verify that the quantum information cost satisfies the desired property on input $\rho = \sigma$.
First, note that if Alice's registers are traced out, then Bob's selector register is effectively a classical register,
and similarly for Alice's selector register if Bob's registers are traced out.
Then, conditional on some classical state of the selector register,
the protocol $\Pi_A$ acts as the corresponding protocol $\Pi_i$ running $\Pi_D$ on input $\sigma_1 \otimes \cdots \otimes \sigma_{i-1} \otimes \sigma \otimes \sigma_{n+i+1} \otimes \cdots \otimes \sigma_{2n}$, and
the other $\sigma_j$'s and $\phi$ are left untouched, in product form to $R$ and all $C_i$ registers.
Then if we run $\Pi_A$ on input $\sigma$, $A_{in} B_{in} R$ act as $A_{in}^i B_{in}^i R^i$ in $\Pi_D$
on input $\sigma^{\otimes n}$, so
\begin{align*}
		QIC (\Pi_A, \sigma)
		&= I(C_1 ; R | B_{in} T_B P_B D_B S_B) + I(C_2 ; R | A_1 D_A S_A) + I(C_3 ; R | B_2 D_B S_B) +\cdots \\
		&= \frac{1}{n} ( I(C_1 ; R | B_{in} T_B P_B D_B (S_B = 1) ) + I(C_2 ; R | A_1 D_A (S_A = 1)) \\
		&+ I(C_3 ; R | B_2 D_B (S_B = 1)) + \cdots \\
		&+ \cdots \\
		& +  I(C_1 ; R | B_{in} T_B P_B D_B (S_B = n) ) + I(C_2 ; R | A_1 D_A (S_A = n)) \\
		&+ I(C_3 ; R | B_2 D_B (S_B = n)) + \cdots ) \\
		&= \frac{1}{n} ( I(C_1 ; R^1 | B_{in}^1 (B_{in}^2 \cdots B_{in}^n) T_B (S_B = 1) ) + I(C_2 ; R^1 | A_1 R^2 \cdots R^n (S_A = 1)) \\
		&+ I(C_3 ; R^1 | B_2 (S_B = 1)) + \cdots \\
		&+ \cdots \\
		& +  I(C_1 ; R^i | (B_{in}^1 \cdots B_{in}^{i-1}) B_{in}^i (B_{in}^{i+1} \cdots B_{in}^n) T_B R^1 \cdots R^{i-1}  (S_B = i) ) \\
		& + I(C_2 ; R^i | A_1 R^{i+1} \cdots R^n (S_A = i)) \\
		& + I(C_3 ; R^i | B_2 R^1 \cdots R^{i-1} (S_B = i)) + \cdots ) \\
		&= \frac{1}{n} ( I(C_1 ; R^1 \cdots R^n | B_{in}^1  \cdots B_{in}^n T_B ) + I(C_2 ; R^1 \cdots R^n | A_1 ) \\
		&+ I(C_3 ; R^1 \cdots R^n | B_2 ) + \cdots ) \\
		& = \frac{1}{n} QIC (\Pi_D, \sigma^{\otimes n}).
\end{align*}
The first equality is by definition of quantum information cost of $\Pi^A$,
the second equality uses the above remark about the selector register of one party being
classical when the registers of the other party are traced out, along with a convex rewriting of conditional mutual information,
the third equality uses the above remark about the product structure of the untouched $\sigma$'s in
$D_A, D_B$, and about $A_{in} B_{in} R$ acting as $A_{in}^i B_{in}^i R^i$
 depending on 
the classical state of the selector register, along with 
 both the facts that mutual information in 
product systems is zero, and that conditioning on a product system is 
useless, the next equality follows by noticing that these quantities are the same as those in $\Pi_D$, and by
recursively applying the chain rule from top to bottom for odd $C_i$ terms, and from bottom up for even $C_i$ terms, 
and then the last equality is by definition of the quantum information cost of $\Pi_D$.
\end{proof}

Then $\inf_{\Pi_A \in \T^M (AND, \epsilon)} QIC (\Pi_A, \sigma_\mu)
\leq \frac{1}{n} \inf_{\Pi_D \in \T^M (DISJ_n, \epsilon) } QIC (\Pi_D, \sigma_\mu^{\otimes n})$, in which the
infima are taken over $M$-message protocols with worst-case error $\epsilon$ for classical inputs, on the respective functions.
Then, for any $\mu$, $\inf_{\Pi_D \in \T^M (DISJ_n, \epsilon) } QIC (\Pi_D, \sigma_\mu^{\otimes n}) \leq QCC^M (DISJ_n, \epsilon)$, and
a lower bound on $\inf_{\Pi_A \in \T^M (AND, \epsilon)} QIC (\Pi_A, \sigma_\mu)$, for any distribution $\mu$ on $00, 01, 10$, implies 
a lower bound on the quantum communication complexity of any $M$-message protocol $\Pi_D$ computing $DISJ_n$
with worst case error $\epsilon$. Is there a $\mu$ for which we can we get an interesting lower bound 
for $\inf_{\Pi_A \in \T^M (AND, \epsilon)} QIC (\Pi_A, \sigma_\mu)$?
In particular, note that this lower bound would need to have a dependence on the number of rounds,
since there are known protocols for computing $DISJ_n$ at communication cost $O(\frac{n}{M} + M)$ for
$M$-message protocols, and in particular letting $M \in \theta (\sqrt{n})$ in these we get an optimal protocol of $\Theta (\sqrt{n})$ \cite{AA03}.
Note that similar techniques
were used in \cite{JRS03}, adapted from the classical result of \cite{BYJKS02}, to obtain a lower bound of $\Omega (\frac{n}{M^2} + M)$ for bounded round protocols.
However, with their different notion of quantum information cost,
they started with a result that we would restate as
$\inf_{\Pi_A \in \T^M (AND, \epsilon)} QIC^\prime (\Pi_A, \sigma_\mu) \leq \frac{M}{n} \inf_{\Pi_D \in \T^M (DISJ_n, \epsilon) } QCC (\Pi_D)$,
with the infima also taken with respect to $M$-message protocols. It is interesting to 
note that we seem to start with a factor of $M$ less
here. We suspect that the reason for this is the following.
In our notation, their definition
of their quantum information cost would be
\begin{align}
QIC^\prime (\Pi, \rho) = \sum_{i>0, odd}  I(X; B_{i-1} | D ) + \sum_{i>0, even} I(Y; A_{i-1} | D),
\end{align} 
for protocols that keep local copies $X, Y$ of Alice's and Bob's classical inputs, and in which it is requested of $\mu$ that, 
for some random variable $D$ inaccessible to Alice and Bob and
conditional on the event $D=d$, the classical random variables $X_d = X | (D =d), Y_d = Y | (D=d)$ are independent.
Note that if we take the uniform distribution on Alice's input and the $0$ input for Bob, a product distribution, then any protocol that starts by distributing copies of all of Alice's bit, and then simply exchange a dummy qubit in the  $\ket{0}$ state for all remaining messages, would have $QIC^\prime \in \Omega (M \cdot n)$, for $M$ messages and $n$ bits inputs.
The communication is of the order of $M + n$, so for $M \in \omega (1)$, since we are interested in the regime $M \leq n$, this is a factor of $M$ more than the communication.
In contrast to their notion, our definition of quantum information cost is bounded by the communication.
If we could get a bound proportional to theirs for 
$\inf_{\Pi_A \in \T^M (AND, \epsilon)} QIC (\Pi_A, \sigma_\mu) \in \Omega (\frac{1}{M})$, 
we would get a lower bound of $\Omega (\max (M, \frac{n}{M}))$ on communication, thus
matching the best known upper bound for bounded round quantum protocol
 for $DISJ_n$.
We conjecture that this is the case, and so that this is worth looking into.
Note however that this would probably require new techniques to lower bound the conditional quantum mutual information, 
a quantity which is notoriously hard to lower bound \cite{LR73, BCY11}. We further discuss these issues in the conclusion.

\section{Conclusion}

We have defined a new notion of quantum information cost and a corresponding
notion of quantum information complexity.
In contrast to previously defined notions, these directly 
provide a lower bound on the communication, independent of round complexity.
To define the quantum information cost of a protocol on an input quantum 
state, we take a detour through classical information cost and 
provide a different perspective on it, relating it to noisy 
channels simulation with side information at the receiver, a variant 
of the classical reverse Shannon theorem studied in information theory. 
This provides a different proof that the information complexity is an achievable 
rate for amortized communication complexity, one that preserves the round complexity, 
and leads to an easier 
quantum generalization than the transcript based view on classical 
information cost. Using this quantum generalization,
we provide an operational interpretation for the $\epsilon$-error 
quantum information complexity of a channel $\N$ on input $\rho$
as the $\epsilon$-error amortized quantum communication complexity 
of $\N$ on input $\rho$, and in this sense provides the right quantum 
generalization of the classical information complexity. We prove some 
interesting properties of quantum information complexity, like additivity 
and convexity. We also prove that on channels implementing, with $\epsilon$ error, 
through a protocol $\Pi$, a classical function $f$ on a classical input 
distributed according to $\mu$, the error criterion that we use provides
an upper bound on the average probability of failure,
$Pr_\mu [\Pi (x, y) \not = f(x, y)] \leq  \frac{\epsilon}{2}$, 
hence giving it an operational meaning in the context of quantum 
information complexity of classical functions. 

An important application of classical information complexity is to prove 
communication complexity lower bounds, and it is reasonable to hope 
that quantum information complexity will lead to interesting lower bound 
on quantum communication complexity. In \cite{JRS03}, the authors use 
a different quantum generalization of information cost to derive an elegant proof of a lower 
bound on the bounded round quantum communication complexity of set disjointness. 
For $M$-message protocols, the authors prove a lower bound of $\Omega (\frac{n}{M^2} + M)$ on 
the quantum communication complexity of $DISJ_n$, close to the best known upper 
bound of $O(\frac{n}{M} + M)$. However, the notion of quantum information cost that they 
use can be as high as $\Omega (M \cdot C)$ for $M$-message protocols communicating $C$ qubits. 
Could we use our notion of quantum information cost, which does not have this 
possible dependence on $M$, to close the gap and prove a lower bound of
 $\Omega (\frac{n}{M} + M)$ on the quantum communication complexity?
We are able to show that on ($n$-fold tensor) classical inputs with no 
support for $x=y=1$, the quantum communication cost of any good protocol (with low error on all input) for $DISJ_n$ is 
at least $n$ times the quantum information cost for any good protocol for 
the AND function on the same input. In contrast, in \cite{JRS03} they obtain a
bound of at least $\frac{n}{M}$ times for a related quantity. However, we fall short of proving a 
$\Omega(\frac{1}{M})$ lower bound on the quantum information cost of good protocols for 
the AND function. The reason for this is that, for the moment, we lack 
the necessary tools to lower bound the quantum information cost, 
which is defined as a sum of conditional mutual information with a quantum 
conditioning register. In \cite{JRS03} the conditioning system is 
classical, and conditioning on a classical system is equivalent to taking an average.
They are then able, through a clever combination of a local transition lemma and an average encoding theorem, to relate the sum of the square roots of all terms in their quantum information
cost quantity to the probability of success of the $M$-message protocol. The round dependent lower
bound follows by using a convexity argument to put these terms inside the square root.
Could we do something similar?
The conditional quantum mutual information has an history of being hard to lower 
bound \cite{LR73, BCY11}, but hopefully recent developments in quantum information theory 
will help in this task. In particular, Berta, Seshadreesan and Wilde lists many rewriting of the 
conditional mutual information in terms of a divergence quantity \cite{BSW14}; hopefully it will be possible 
to use such a rewriting to derive an analogue of the average encoding theorem \cite{KNTZ01} with a quantum 
conditioning register. Note that the purifying register $R$ can play the role of a 
classical system for us, since by the data processing inequality we can
measure it first and still obtain a valid lower bound.
Maybe another approach to lower bound quantum information complexity would be to try to extend to the quantum setting the powerful approach of Kerenidis, Laplante, 
Lerays and Roland 
through zero-error protocols \cite{KLLRX12}. 
Note that Chailloux and Scarpa define a notion of superposed information cost \cite{CS13} appropriate for entangled games with no communication to obtain an exponential decay result for parallel repetition of such games. Is there a connexion between this quantity and ours?

We are confident that this new notion of quantum information complexity will
stimulate interesting developments in quantum communication complexity, as well as in
quantum information theory to obtain tools that would prove helpful for such developments.
Interesting directions for this research program is first to obtain 
interesting lower bound on the quantum information complexity of specific functions, possibly
by developing further techniques for lower bounding the conditional mutual information. 
Also, with the operational interpretation that we prove for quantum information 
complexity, this opens the door to the study of direct sum questions in quantum 
communication complexity. A direct sum results shows that to implement $n$ 
instances of a task requires something on the order of $n$ times the resources
 needed to implement a single instance of the task. Hence, the ability to compress 
the best protocols down to something close to their quantum information cost would lead
to a direct sum result. 
A first step in this direction could be to try to obtain such a single-copy compression result for bounded 
round protocols, possibly by using a one-shot version of the state redistribution protocol. 
However, the one-shot result of \cite{YD09} does not seem to be sufficient for this 
purpose. Does there exist one-shot protocols for state redistribution, possibly interactive, that would 
enable us to obtain interesting compression results for the bounded round scenario? 
Note that a quantum generalisation of the correlated sampling protocol of Braverman and Rao \cite{BR11}
has been proved recently \cite{AJMSY14}.
In this quantum version, if Alice knows the spectral decomposition of a state $\rho$, Bob knows the spectral decomposition
of a state $\sigma$, and both know the relative entropy $D (\rho || \sigma)$ between these, then using shared entanglement, Bob can sample a state $\tilde{\rho}$ close to $\rho$ at communication cost proportional to $D(\rho || \sigma)$.
Can we further generalize  this to a setting with quantum side information, to compress messages while keeping quantum correlation?
Another interesting research direction would be to try to obtain the quantum 
analogue of the result about a prior-free information cost and its relation to 
worst-case amortized communication complexity. A first approach to try for this 
problem could be to obtain a composable quantum reverse Shannon theorem on arbitrary input,
with feedback
and side information at the receiver that would be tailor-made for our purpose. However, even
in the case without side information at the receiver, it is known that for the arbitrary
input case, standard entanglement, such as maximally entangled states, is not sufficient, and more exotic
forms of entanglement such as entanglement embezzling states \cite{vDH03} are needed \cite{BDHSW09,BCR11}.
As is made clear from our direct coding theorem, those reverse Shannon theorems achieve a stronger, global
error parameter than the more local one that we require. Are these forms of entanglement also required
in this restricted scenario?

Finally, it would be interesting to see if the quantum information complexity 
paradigm would enable to shed some light on the question of equivalence between 
the model of Yao versus the model of Cleve-Buhrman for quantum communication 
complexity. Our communication model is closer to the Cleve-Buhrman one, 
and we use the pre-shared entanglement in a crucial way in many of our results.  
It would be possible in principle, even though the interpretation might not be the same, 
to define the quantum information complexity in the Yao model by restricting the infimum to be taken
over protocols where the state $\psi$ is a pure product state. But since we can distribute entanglement at no cost with our definition of quantum information cost, what would be an appropriate definition
in this setting? 
 Would it be any easier 
to relate the information complexity in these two models than it is to relate the 
corresponding communication complexities? And what about amortized communication complexity?

\paragraph{Acknowledgement}

We are grateful to Omar Fawzi for a stimulating discussion in the early stages of this work and for useful feedback on a previous version, as well as to Gilles Brassard and Alain Tapp for useful discussions and feedback during the write-up of this paper. We acknowledge financial support from a Fonds de Recherche 
Qu\'ebec-Nature et Technologies B2 Doctoral research scholarship.

\appendix

\section{External Information: Classical and Quantum}

The material in this section is more exploratory in nature, and is still work in progress.
Correspondingly, the presentation will be less formal, but ideas from the previous 
sections can be used to formalize the material here. It can safely be skipped  without affecting
understanding in the other sections.

The classical external information cost is defined as $IC^{ext} = I (XY; \Pi (X, Y))$, 
and is usually viewed as a measure
of how much information the protocol leaks about the input to an external observer, 
being an allowed observer or a potential eavesdropper. It is a natural quantity to use 
for example in the standard communication complexity setting with cryptographic 
consideration, and also in the simultaneous message passing model, in which Alice and Bob
must send simultaneously a single message to an external referee who must produce 
the output. We give two alternative
interpretations of this quantity, one in the standard communication complexity setting, 
and one in a generalization of the simultaneous message passing model. 
We then show that the quantum generalizations 
 give rise to two different 
quantities, even though they were
both characterized by a unique classical quantity,
 and so the external information 
cost has a dual role depending on the
interpretation we want to give it. 

Let us first define the generalization of the simultaneous message passing model that we consider. 
We consider the same definition for
classical protocols as given in section \ref{sec:bas}, but the difference is now in the output: 
instead of having as output of the protocol Alice and Bob outputting some function 
of the transcript and their local input, we now want to have an external referee $R$
generating the output to the protocol by computing some function of the transcript only. 
If we want to obtain the simultaneous
message model from this generalization, we restrict protocols to two messages, 
and require the following Markov condition: $M_2 | M_1 Y R = M_2 | Y R$, 
i.e.~the second message must be independent of the first message when conditioned on the input $Y$
and shared randomness $R$.

The important point to notice is that now, when compressing the protocol, we
must insure that the referee gets all the necessary information in the 
transcript. We define an alternate external information
cost quantity as $IC_\mu^{\prime ext} = I(M_1^R; X R^A | R^R) 
+ I(M_2^R ; Y M_1^B R^B | M_1^R R^R) + I(M_3^R ; X M_2^A M_1^A R^A | M_2^R M_1^R R^R) + \cdots +
I(M_N^R; Y M_{N-1}^B \cdots M_1^B R^B | M_{N-1}^R \cdots M_1^R R^R)$,
in which we distinguish between Alice's, Bob's and the referee's copy 
of the public randomness $R$ and the messages $M_i$. Note that this 
is easily seen to be equivalent to $IC^{ext}$, using a similar argument 
as for the internal information cost, along with a combination of the chain rule and the Markov condition on
messages versus the inputs: $M_i | X Y M_{i-1} \cdots R = M_i | Y M_{i-1} \cdots R$ for 
even $i$, and $M_i | X Y M_{i-1} \cdots R = M_i | X M_{i-1} \cdots R$ for odd $i$. Similar 
to the case for internal information cost, the external information cost can 
then be viewed as an asymptotically achievable cost for transferring the messages such 
that both the other player and the referee can get the information, in the 
limit of many protocols. We then get an operational interpretation for the 
external information complexity as an achievable amortized communication complexity 
in this setting with an external referee.
It is then clear that the external information
should be interesting in the simultaneous message passing model.
Note that this setting in the interactive case is interesting only when the entropy of $f$, for $f$ the output function, is much larger than
the internal information complexity, since otherwise Alice and Bob can simply compute $f$ at this cost
and send the result to the referee at small additional cost.

It is less straightforward to generalize this to the quantum setting, 
mostly due to the fact that we cannot send a copy of messages to both 
the other player and to a referee in the quantum setting. 
The quantum generalization of this setting would be the following:
given a quantum channel $\N \in \C(A_{in} \otimes B_{in}, R_{out})$
and input state $\rho \in \D (A_{in} \otimes B_{in})$, Alice and Bob are 
given input register $A_{in}, B_{in}$ at the outset of the protocol, respectively, 
and the referee should output register $R_{out}$ at the end of the protocol, 
which should be in state $\N(\rho)$ up to some small error. In the protocol,
at each time step, after applying their unitary, the players send a communication 
register to the other player, and another one to the
referee, and keep some quantum memory. They want to minimize the total 
communication cost, that is, the sum of the cost from player to referee, and from player to player. The players and the referee can share an arbitrary 
tripartite entangled state at the outset of the protocol. A reasonable 
external quantum information cost for this setting, with the operational 
interpretation of being an asymptotically achievable rate of total communication, would
follow in the same way as the standard quantum information cost quantity, from repeated application of state redistribution,
along with an appropriate partition of the side information and unavailable information in this setting.

If we instead want to take the eavesdropper view on the external information 
cost, we go back to the standard two-party 
communication complexity setting, and want to consider a passive 
eavesdropper, who has access to previous correlation, to all communication 
and to all garbage information from the protocol at the end (classically, this
would be the transcript), but is not allowed 
to alter the communication. However, this quantity, when evaluated in a quantum setting for classical functions and inputs
would be bounded by a constant due to the fact that a passive quantum eavesdropper
cannot listen to the quantum communication, and due to the presence of reversible computing in
the quantum world (see \cite{Bra12a} for a discussion), which would limit the amount of
information the passive eavesdropper would get at the end.

Note that this channel simulation view of information cost also leads rather straightforwardly to possibly interesting
definitions of information cost in a multipartite setting, in particular in a model with some external coordinator,
as an achievable rate of total asymptotic communication which lower bounds single-copy communication.
It is then possible to define appropriate quantum generalizations, by carefully adapting the model, and then once again by considering
repeated application of state redistribution. Of course, the interest of such information cost quantities find their interest in potential applications.
We leave such possible applications of these potential definitions for future work.

\end{document}